\documentclass[12pt]{article}
\usepackage{a4wide}
\usepackage{latexsym}
\usepackage{amsmath}
\usepackage{amssymb}
\usepackage{amsfonts}
\usepackage{amscd}
\usepackage{amsthm}
\usepackage{cite}
\usepackage{placeins}
\usepackage{axodraw2}

\usepackage{pslatex}
\usepackage{graphicx}
\usepackage[latin1,utf8]{inputenc}
\usepackage[T1]{fontenc}

\usepackage{mathtools}
\usepackage{mathdots}
\usepackage{bm}
\usepackage{nomencl}

\allowdisplaybreaks

% Stefan's abbreviations
\newcommand{\bq}{\begin{eqnarray}}
\newcommand{\eq}{\end{eqnarray}}

%------------------------------------------------------------------------------
% Notation for certain variables

% Euler's constant

% integer part of space-time dimension

% dimensional regularisation parameter
\newcommand{\eps}{\varepsilon}

% kinematic variables
\newcommand{\kinvar}{y}

% loop number

% number of internal edges in a graph

% number of external edges in a graph

% number of linear independent external momenta

% number of kinematic variables

% number of master integrals

% number of letters

% number of Baikov variables

% number of Baikov variables on the maximal cut

% number of divisors on the maximal cut

\newcommand{\laportaorder}{(a,w,o,|\mu|,\dots)}

% maximal pole order

% differential
\newcommand{\diff}{{d}}

\newcommand{\Divisor}{P}

\newcommand{\preabs}{C_{\mathrm{abs}}}
\newcommand{\prerel}{C_{\mathrm{rel}}}
\newcommand{\preclutch}{C_{\mathrm{clutch}}}
\newcommand{\prebaikov}{C_{\mathrm{Baikov}}}

\newcommand{\Fcomb}{F_{\mathrm{comb}}}
\newcommand{\Fgeom}{F_{\mathrm{geom}}}
\newcommand{\Fgen}{F}

\newcommand{\Hcomb}{H_{\mathrm{comb}}}
\newcommand{\Hgeom}{H_{\mathrm{geom}}}
\newcommand{\Hgen}{H}

\newcommand{\hgeom}{h_{\mathrm{geom}}}

\newcommand{\differentialform}{\Psi}

\newcommand{\Acomb}{\Omega_{\mathrm{comb}}}
\newcommand{\Ageom}{\Omega_{\mathrm{geom}}}
\newcommand{\Agen}{\Omega}

\newcommand{\absmu}{|\mu|}

\newcommand{\qbar}{q}

\newcommand{\Frobeniusbasis}{\psi}

\theoremstyle{plain}
\newtheorem{theoremcounter}{}[]

% we use the same counter for theorem/corollary/proposition/lemma

\newtheorem{proposition}[theoremcounter]{Proposition}

%------------------------------------------------------------------------------

\begin{document}

\thispagestyle{empty}

\begin{flushright}
  MITP/25-052
% \\ version of \today
\end{flushright}

\vspace{1.5cm}

\begin{center}
  {\Large\bf The unequal-mass three-loop banana integral \\
  }
  \vspace{1cm}
  {\large Sebastian~P\"ogel${}^{a}$, 
          Toni~Teschke${}^{b}$, 
          Xing~Wang${}^{c}$ and
          Stefan~Weinzierl${}^{b}$ \\
  \vspace{1cm}
      {\small \em ${}^{a}$ Paul Scherrer Institut, CH-5232 Villigen PSI, Switzerland} \\
  \vspace{2mm}
      {\small \em ${}^{b}$ PRISMA Cluster of Excellence, Institut f{\"u}r Physik, Staudinger Weg 7,} \\
      {\small \em Johannes Gutenberg-Universit{\"a}t Mainz, D - 55099 Mainz, Germany}\\
  \vspace{2mm}
      {\small \em ${}^{c}$ School of Science and Engineering,} \\
      {\small \em The Chinese University of Hong Kong, Shenzhen, 518172  Guangdong, China}
  } 
\end{center}

\vspace{2cm}

% abstract ---------------------------------------
\begin{abstract}\noindent
  {
We compute the three-loop banana integral with four unequal masses in dimensional regularisation. 
This integral is associated to a family of K3 surfaces, thus representing an example for Feynman integrals with geometries beyond elliptic curves. 
We evaluate the integral by deriving an $\eps$-factorised differential equation, for which we rely on the algorithm presented in a recent publication~\cite{Bree:2025maw}.
Equipping the space of differential forms in Baikov representation by a set of filtrations inspired by Hodge theory, we first obtain a differential equation with entries as Laurent polynomials in $\eps$.
Via a sequence of basis rotations we then remove any non-$\eps$-factorising terms.
This procedure is algorithmic and at no point relies on prior knowledge of the underlying geometry.
}
\end{abstract}

\vspace*{\fill}

% -----------------------------------------------------------------------------
\newpage

\section{Introduction}
\label{sect:intro}

Perturbative Quantum Field Theory (QFT) has proven to be an exceptionally effective tool for obtaining physical observables in particle physics.
A central class of objects in QFT are Feynman integrals, integrals over the momenta of virtual particles participating at a given perturbative level.
These integrals are independent of any specific Lagrangian, and as such encode fundamental mathematical features of QFT.
They are essential for making precision predictions at current and future collider experiments, such as the LHC and FCC.
At the same time, they exhibit deep number-theoretic and geometric structure.
From the perspectives of fundamental physics, particle phenomenology, and mathematics, Feynman integrals thus represent a central object of study.

In dimensional regularisation we evaluate Feynman integrals in space-time dimension $D=D_0-2\eps$, and wish to derive a Laurent series in $\eps$.
The method of differential equations has proven to be one of the most systematic approaches for this task, both for obtaining analytic and numerical results.
Integration-by-parts (IBP) imposes linear relations on the space of Feynman integrals.
The resulting vector space has been shown to be finite~\cite{Smirnov:2010hn}.
A basis of integrals $I$ of this space---usually called master integrals---then satisfies a linear system of differential equations $\diff I=A(\eps,\kinvar)I$, with $\kinvar$ and $\diff$ being the set of kinematic variables and the associated differential, and $A(\eps,\kinvar)$ being a matrix of differential one-forms.
The task of obtaining Laurent series in $\eps$ is trivialised if one is able to find a basis of master integrals $K$, such that the associated differential equation is in $\eps$-factorised form~\cite{Henn:2013pwa}, i.e.~$\diff K=\eps\tilde{A}(\kinvar)K$.
The Laurent series coefficients are then expressible as iterated integrals with kernels determined by $\tilde{A}$.

The complexity of Feynman integrals largely depends on the number of loops and the number of independent kinematic scales.
Simple cases often evaluate to polylogarithms or their generalisations, while the presence of internal masses tends to give rise to more intricate function spaces.
These function spaces are known to admit geometric interpretations, such that it is natural to associate specific geometries to Feynman integrals.
Notably, elliptic curves, hyperelliptic curves, and Calabi–Yau manifolds are known to appear in this context.

Over the years, various methods have been developed to obtain bases with $\varepsilon$-factorised differential equations, with approaches largely determined by the class of underlying geometry. 
General methods for polylogarithmic integrals are well-established~\cite{moserOrderSingularityFuchs1959,Lee:2014ioa,Lee:2017oca} and have been implemented in computational packages~\cite{Prausa:2017ltv,Gituliar:2017vzm,Meyer:2017joq,Lee:2020zfb}. 
For elliptic integrals, techniques are discussed in refs.~\cite{Frellesvig:2021hkr,Dlapa:2022wdu,Gorges:2023zgv,Chaubey:2025adn}, while methods for Calabi--Yau integrals depending on one kinematic parameter are treated in refs.~\cite{Pogel:2022ken,Frellesvig:2024rea,Duhr:2025lbz}. 
Differential equations for hyperelliptic integrals were first addressed in ref.~\cite{Duhr:2024uid}. 
Recently, a generic method independent of the underlying geometry was presented in ref.~\cite{Bree:2025maw}.

Focusing on the class of Calabi--Yau manifolds, the prototypical examples of Feynman integrals with such geometries are the banana integrals.
At $\ell$ loops, these integrals are known to be associated with families of Calabi--Yau $(\ell-1)$-folds~\cite{Klemm:2019dbm,Bonisch:2020qmm,Bonisch:2021yfw,Candelas:2021lkc}, parametrised by the external kinematics.
The associated Calabi--Yau manifolds are also known as Hulek--Verrill manifolds~\cite{Hulek_Verrill_2005,Candelas:2021lkc}.
Consequently, banana integrals have been extensively studied, both at two loops~\cite{Sabry:1962rge,Bogner:2019lfa,Broadhurst:1993mw,Laporta:2004rb,Bloch:2013tra,Adams:2017ejb,Broedel:2017kkb,Broedel:2017siw,Adams:2018yfj,Honemann:2018mrb} and at higher loop orders~\cite{Groote:2005ay,Bloch:2014qca,Primo:2017ipr,Vanhove:2018mto,Broedel:2019kmn,Broedel:2021zij,Kreimer:2022fxm}.
In the equal-mass limit, a method to obtain $\varepsilon$-factorised differential equations at all loop orders was described by a subset of the authors in refs.~\cite{Pogel:2022yat,Pogel:2022ken,Pogel:2022vat}.
The first results for $\eps$-factorised differential equations involving more than one distinct mass were presented in refs.~\cite{Maggio:2025jel,Duhr:2025ppd}.
At three loops, the relevant function spaces are known to exhibit special relations to elliptic modular forms~\cite{verrill1996root,GSJoyce_1972} and their generalisations~\cite{Duhr:2025tdf}.
Despite this progress, the explicit computation of banana integrals with fully generic mass configurations beyond two loops remains an open problem.

In the present paper we take the next step, and present for the first time results for the three-loop banana integral with all four masses being distinct, see fig.~\ref{fig:banana}.
The integral is associated with a Calabi--Yau 2-fold, also known as a K3 surface.
An independent treatment of the same integral appears simultaneously in ref.~\cite{Duhr:threeLoopBanana}.

The challenges in computing this integral are two-fold.
First, having distinct masses leads to large expressions in intermediate steps.
Second, the number of required master integrals increases.
The integral family with four unequal masses involves a system consisting of 15 master integrals: four trivial tadpole integrals and an 11-dimensional non-trivial sector.
This non-trivial sector contains not only integrals directly associated with the K3 geometry, but also an additional set of integrals that arise due to working in non-integer dimensions.

To compute the integral at hand, we derive an $\varepsilon$-factorised differential equation using the algorithm described in ref.~\cite{Bree:2025maw}, for which this paper serves as a highly non-trivial example.
Working in the loop-by-loop Baikov representation~\cite{Frellesvig:2017aai,Frellesvig:2024ymq,Jiang:2024eaj}, we equip the space of differential forms associated with the integral with a set of filtrations inspired by Hodge theory.
These filtrations allow us to separate the sector involving 11 master integrals into six integrals related to the K3 surface and five additional integrals.
The latter are associated with isolated points on the K3 surface where differential forms acquire logarithmic singularities that are regulated in the sense of twisted cohomology~\cite{Mastrolia:2018uzb,Frellesvig:2019uqt}.
Choosing a set of master integrands according to this separation, including a certain pole-order criterion described in ref.~\cite{Bree:2025maw}, automatically leads to a differential equation, where all entries of the matrix $A(\eps,y)$ are Laurent polynomials in $\varepsilon$, with the lowest order being $\varepsilon^{-2}$ and the highest being $\varepsilon^1$.
Furthermore, since the five additional masters are associated with isolated points, they exhibit similarities to polylogarithmic integrals.
Following the methodology of constructing $\mathrm{d}\log$ integrands~\cite{Henn:2020lye,Arkani-Hamed:2017tmz,Arkani-Hamed:2010pyv,Cachazo:2008vp}, it is straightforward to choose a good basis where the associated sub-block of the differential equation is automatically $\varepsilon$-factorised.  
More generally, the method of ref.~\cite{Bree:2025maw} guarantees that the Laurent series coefficients of the differential equation have a simple block-structure induced by one of the filtrations.
Given this structure, we can perform a series of basis rotations that systematically eliminate any remaining terms of order $\varepsilon^{-2}$, $\varepsilon^{-1}$, and $\varepsilon^{0}$.
Thus we are left with terms only of order $\varepsilon^1$ and have derived an $\varepsilon$-factorised differential equation.

The described approach implicitly captures the underlying geometry of the Feynman integral by classifying the space of differential forms in Baikov representation.
These forms are defined in projective space, which may be weighted when the parametrisation introduces a hypersurface that can be realised as a double cover.
This space is equipped with divisors defined by the Baikov polynomials, whose vanishing loci define hypersurfaces that serve as the geometric loci of interest.
Differential forms are then naturally associated with these hypersurfaces or their intersections.
In the example at hand, the K3 surface is realised as a double cover $y^2-P(z_1,z_2)=0$ in weighted projective space, with associated differential forms corresponding to the known cohomology.
Additional cohomology classes are classified by their support on intersections of this K3 surface with divisors
defined by the remaining Baikov polynomials through the presence of logarithmic singularities along these divisors.
Obtaining representatives of these cohomology classes is achieved automatically
through a specific integral ordering during IBP reduction, without requiring careful integrand analysis.
The subsequent step of removing non-$\varepsilon$-factorising terms is similarly algorithmic 
and does not require knowledge of the specific geometry (a K3 surface in this case) in advance.
While a priori knowledge of the geometry and associated data is helpful for intermediate steps, the procedure does not depend on it.

This paper is organised as follows:
In the next section we introduce the unequal-mass three-loop banana integral and the notation used throughout this paper.
In section~\ref{sect:step_1} we construct a basis of master integrals, which leads to a differential equation in Laurent polynomial form.
In section~\ref{sect:step_2} we construct the final basis, which has a differential equation in $\eps$-factorised form.
In section~\ref{sect:mum_point} we analyse the unequal-mass three-loop banana integral in a neighbourhood of a point of maximal unipotent monodromy.
In section~\ref{sect:results} we present results for the unequal-mass three-loop banana integral.
Finally, our conclusions are given in section~\ref{sect:conclusions}.
For the convenience of the reader we give in appendix~\ref{appendix:J_10_J_14} the explicit expressions for the master integrals $J_{10},\dots,J_{14}$ in terms of the basis $I$.
In appendix~\ref{appendix:dgl_step_2} we work out the $\eps$-independent differential equations required in step two 
for the case where there are three non-trivial parts in the filtration.
This is the case which corresponds to the unequal-mass three-loop banana integral.

% -----------------------------------------------------------------------------

\section{Notation and setup}
\label{sect:notation}

% -----------------------------------------------------------------------------

We are interested in the three-loop banana integral with unequal masses, shown in fig.~\ref{fig:banana_unequal}.
\begin{figure}
\begin{center}
\includegraphics[scale=1.0]{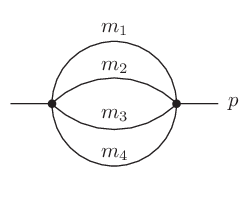}
\end{center}
\caption{
\label{fig:banana}
The three-loop banana integral with unequal masses.
}
\label{fig:banana_unequal}
\end{figure}
To this aim 
we consider the following family of Feynman integrals 
\bq
\label{def_three_loop_banana_integral}
 I_{\nu_1 \nu_2 \nu_3 \nu_4 \nu_5 \nu_6 \nu_7 \nu_8 \nu_9}
 = e^{3 \gamma_E \eps}
 \left(\mu^2\right)^{\nu-\frac{3}{2}D}
 \int \left( \prod\limits_{a=1}^{3} \frac{d^Dk_a}{i \pi^{\frac{D}{2}}} \right)
 \frac{1}{\prod\limits_{b=1}^{9} \sigma_b^{\nu_b}},
\eq
where $D$ denotes the number of space-time dimensions,
$\eps$ the dimensional regularisation parameter,
$\gamma_E$ the Euler--Mascheroni constant
and 
\bq
 \nu & = & \sum\limits_{j=1}^9 \nu_j.
\eq 
The parameter $\mu$ is an arbitrary scale introduced to render the integrals dimensionless.
Without loss of generality we set $\mu^2=-p^2$, where $p$ is the external momentum flowing through the two-point function.
We consider these integrals in $D=2-2\eps$ space-time dimensions.
The inverse propagators are given by 
\begin{align}
 \sigma_1 & = -k_1^2 + m_1^2,
 &
 \sigma_2 & = -k_2^2 + m_2^2,
 &
 \sigma_3 & = -k_3^2 + m_3^2,
 \nonumber \\
 \sigma_4 & = -\left(k_1+k_2+k_3-p\right)^2 + m_4^2,
 &
 \sigma_5 & = -\left(k_1+k_2-p\right)^2,
 &
 \sigma_6 & = -\left(k_1-p\right)^2,
 \nonumber \\
 \sigma_7 & = -\left(k_1+k_2\right)^2,
 & 
 \sigma_8 & = -\left(k_1+k_3\right)^2,
 &
 \sigma_9 & = -\left(k_2+k_3\right)^2.
\end{align}
We are interested in the Feynman integrals with
\bq
 \nu_5, \nu_6, \nu_7, \nu_8, \nu_9 \le 0.
\eq
The variables $\sigma_1, \dots, \sigma_9$ are the Baikov variables.
They can be grouped in three groups:
The variables $\sigma_1, \dots, \sigma_4$ correspond to the propagators of the 
unequal-mass three-loop banana integral.
The variables $\sigma_5$ and $\sigma_6$ are the additional Baikov variables, which appear in a loop-by-loop Baikov representation \cite{Frellesvig:2017aai,Frellesvig:2024ymq}.
In particular, on the maximal cut of the three-loop banana integral
(where we set $\sigma_1=\sigma_2=\sigma_3=\sigma_4=0$) we have a two-dimensional Baikov representation with variables $\sigma_5$ and $\sigma_6$.
The variables $\sigma_7,\sigma_8,\sigma_9$ are only relevant for integration-by-parts reduction programs.

There is some freedom on how to choose the irreducible scalar products.
Our choice is adapted to the loop-by-loop Baikov representation, where we first consider 
the loop formed by the inverse propagators $\sigma_3$ and $\sigma_4$,
followed by 
the loop formed by the inverse propagators $\sigma_2$ and $\sigma_5$,
and finally 
the loop formed by the inverse propagators $\sigma_1$ and $\sigma_6$ 
(this corresponds to the graph of sector $63$ in fig.~\ref{fig:super_sectors}).
On the maximal cut of the three-loop banana integral we obtain a two-dimensional representation (with Baikov variables $\sigma_5$ and $\sigma_6$), 
given by (up to an irrelevant phase)
\bq
\label{def_Baikov_representation}
 \left(2\pi i\right)^4
 \mathrm{Res}_{\sigma_1,\sigma_2,\sigma_3,\sigma_4} \left( I_{1 1 1 1 \nu_5 \nu_6 0 0 0} \right)
 = 
 \frac{2^{6+6 \eps} \pi^{\frac{9}{2}} e^{3 \gamma_E \eps} \left(p^2\right)^{\nu_5+\nu_6+1+4\eps}}{\Gamma\left(\frac{1}{2}-\eps\right)^3}
 \int \frac{d\sigma_5 d\sigma_6}{\left(2\pi i\right)^2} \; u \;
 \sigma_5^{-\nu_5} \sigma_6^{-\nu_6},
 \;
\eq
where the twist function \cite{Mastrolia:2018uzb,Frellesvig:2019uqt} is given by
\bq 
\label{def_u}
 u 
 & = &
 \sigma_5^\eps 
 \sigma_6^\eps 
 \left[ \sigma_6^2 +2 \left(m_1^2+p^2\right) \sigma_6 + \left(m_1^2-p^2\right)^2 \right]^{-\frac{1}{2}-\eps}
 \left[ \left(\sigma_6-\sigma_5\right)^2 +2 m_2^2 \left( \sigma_5+\sigma_6\right) + m_2^4 \right]^{-\frac{1}{2}-\eps}
 \nonumber \\
 & &
 \left[ \sigma_5^2 +2 \left(m_3^2+m_4^2\right) \sigma_5 + \left(m_4^2-m_3^2\right)^2 \right]^{-\frac{1}{2}-\eps}.
\eq
We will see that this representation already contains the essential information to construct a good basis of master integrals.

The unequal-mass three-loop banana integrals depend on four (dimensionless) kinematic variables.
A choice for the kinematic variables is
\bq
 \kinvar_1 \; = \; - \frac{m_1^2}{p^2},
 \;\;\;\;\;\;
 \kinvar_2 \; = \; - \frac{m_2^2}{p^2},
 \;\;\;\;\;\;
 \kinvar_3 \; = \; - \frac{m_3^2}{p^2},
 \;\;\;\;\;\;
 \kinvar_4 \; = \; - \frac{m_4^2}{p^2}.
\eq
For each integral $I_{\nu_1 \nu_2 \nu_3 \nu_4 \nu_5 \nu_6 \nu_7 \nu_8 \nu_9}$ 
we define its sector id by
\bq
\label{def_sector_id}
 \mathrm{id}
 & = & \sum\limits_{j=1}^9 2^{j-1} \Theta\left(\nu_j\right).
\eq
Here, $\Theta(x)$ denotes the Heaviside step function, defined by $\Theta(x)=1$ for $x>0$ and $\Theta(x)=0$ otherwise.
The three-loop banana integral has sector id fifteen, and this is the sector we are mainly interested in.
We call this sector the sector of interest.
Sectors with a lower sector id are trivial, they are either zero or tadpoles.
Although our main interest is the three-loop banana integral (for which we have $\nu_5,\nu_6 \le 0$ in the Baikov representation eq.~(\ref{def_Baikov_representation})),
it will be convenient to consider $\nu_5, \nu_6 \in {\mathbb Z}$.
If $\nu_5 > 0$ or $\nu_6 > 0$ we speak about a super-sector.
There are three relevant super-sectors (sectors $31$, $47$ and $63$).
\begin{figure}
\begin{center}
\includegraphics[scale=1.0]{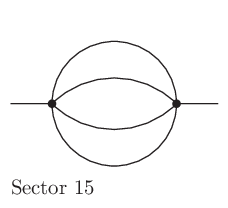}
\includegraphics[scale=1.0]{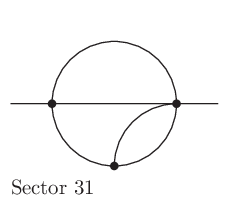}
\includegraphics[scale=1.0]{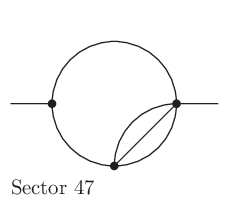}
\includegraphics[scale=1.0]{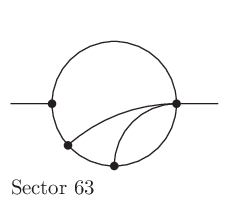}
\end{center}
\caption{
The sector of interest (sector $15$) together with the relevant super-sectors.
}
\label{fig:super_sectors}
\end{figure}
These are shown, together with the sector of interest, in fig.~\ref{fig:super_sectors}.

There are $15$ master integrals in the family of the unequal-mass three-loop banana integrals:
Four master integrals are tadpoles, $11$ master integrals are in the sector of interest.
A possible starting basis $I$ is
\bq
\label{def_basis_I}
 I & = &
 \left(
 I_{111000000}, I_{110100000}, I_{101100000}, I_{011100000},
 \right. \nonumber \\
 & &
 \left.
 \hspace{0.25em}I_{111100000},
 I_{211100000}, I_{121100000}, I_{112100000}, I_{111200000},
 \right. \nonumber \\
 & &
 \left.
 \hspace{0.25em}I_{221100000}, I_{212100000}, I_{211200000},
 I_{122100000}, I_{121200000}, I_{112200000}
 \right)^T.
\eq
This basis is compatible with the six-dimensional loop-by-loop Baikov representation (with the Baikov variables $\sigma_1-\sigma_6$)
of eq.~(\ref{def_Baikov_representation}), as it does not involve the irreducible scalar products $\sigma_7-\sigma_9$.
The super-sectors $31$ and $47$ have one master integral each. A possible basis for these super-sectors is
\bq
 I_{111110000},
 \;\;\;
 I_{111101000}.
\eq
The super-sector $63$ is reducible and does not have any master integral.

The master integrals of the basis $I$ satisfy a differential equation of the form
\bq
 dI\left(\eps,\kinvar\right) & = & A\left(\eps,\kinvar\right) I\left(\eps,\kinvar\right),
\eq
where the entries of the matrix $A(\eps,\kinvar)$ are differential one-forms, rational in $\eps$ and $\kinvar$.
In the main part of the paper we will construct two further bases $J$ and $K$, related to $I$ by
\bq
 J \; = \; R_1^{-1} I, & & K \; = \; R_2^{-1} J \; = \; R_2^{-1} R_1^{-1} I,
\eq
such that the differential equation in the basis $J$ is in Laurent polynomial form (and compatible with a filtration, to be made precise in the next section)
\bq
\label{eq:Laurent}
 d J\left(\eps,\kinvar\right) & = & \sum\limits_{k=-2}^{1} \eps^k \hat{A}^{(k)}\left(\kinvar\right) J\left(\eps,\kinvar\right),
\eq
where $\hat{A}^{(k)}(\kinvar)$ is independent of $\eps$ 
and the differential equation
in the basis $K$ is in $\eps$-factorised form
\bq
 d K\left(\eps,\kinvar\right) & = & \eps \tilde{A}(\kinvar) K\left(\eps,\kinvar\right).
\eq
The matrices $R_1$ and $R_2$ are $(15 \times 15)$-matrices.
The transformation matrix $R_1$ is rational in $\eps$ and in the kinematic variables,
the transformation matrix $R_2$ involves transcendental functions of the kinematic variables.

% -----------------------------------------------------------------------------

\section{The construction of the basis $J$}
\label{sect:step_1}

In the first step we construct a basis $J$, which is related to the original basis $I$ by a rotation $R_1(\eps,\kinvar)$.
We follow the general method and the notation of ref.~\cite{Bree:2025maw}.
The essential idea is, that the basis $J$ is obtained directly from the Laporta algorithm if a special order relation is used.
The order relation uses the number of sequential non-zero residues and the pole order of the integrands.
This geometric information can be computed algorithmically from the integrands.
Below we give a detailed discussion how the Laporta algorithm with the ordering relation of ref.~\cite{Bree:2025maw} works in the case
of the unequal-mass three-loop banana integral.

The rotation matrix $R_1(\eps,\kinvar)$ is rational in $\eps$ and in the kinematic variables $\kinvar$.
The basis $J$ has the property that the differential equation in this basis is in Laurent polynomial form, see eq.~(\ref{eq:Laurent}),
and compatible with the $\Fcomb^\bullet$-filtration (to be defined below).
The construction of the basis $J$ is guided by twisted cohomology and Hodge theory.
The sector of interest (sector $15$)
has $11$ master integrals and defines an $11$-dimensional vector space $V^2$.
The superscript ``$2$'' of $V^2$ indicates that the integrands on the maximal cut will be two-forms.
Within twisted cohomology we study the integrands on the maximal cut.
We denote the (infinite-dimensional) vector space of integrands by $\Agen_\omega^2$,
and the finite-dimensional vector space of integrands modulo integration-by-parts relations by $\Hgen^2_\omega$.
In our particular case, the dimension of the vector space $\Hgen^2_\omega$ is thirteen.
The dimension of the vector space $\Hgen^2_\omega$ is larger than the dimension of  the vector space $V^2$, 
because it also includes the super-sectors.
For the case at hand the super-sectors are given by sectors $31$, $47$ and $63$. 
The sector $63$ has no master integrals, the sectors $31$ and $47$ have one master integral each.
This explains the difference in dimensions.
\begin{table}
\begin{center}
\begin{tabular}{|l|l|l|}
 \hline
 vector space & dimension & description \\
 \hline
 $\Agen_\omega^2$ & $\dim \Agen_\omega^2 = \infty$ & Integrands on the maximal cut \\
 $H_\omega^2$ & $\dim H_\omega^2 = 13$ & Integrands on the maximal cut mod linear relations \\
 $V^2$ & $\dim V^2 = 11$ & Integrals on the maximal cut mod linear relations \\
 \hline
\end{tabular}
\caption{
The vector spaces and their dimensions.
\label{tab:vector_spaces}
}
\end{center}
\end{table}
In table~\ref{tab:vector_spaces} we summarise the various vector spaces and their dimensions.
There is an injective map
\bq
 \iota & : & V^2 \rightarrowtail \Hgen^2_{\omega},
\eq
obtained from expressing the maximal cut of the Feynman integral in the Baikov representation.

It is convenient to set $\sigma_5 = - p^2 z_1$ and $\sigma_6 = - p^2 z_2$, such that $z_1$ and $z_2$
are dimensionless variables.
We study the twisted cohomology in projective space ${\mathbb C}{\mathbb P}^2$ with homogeneous coordinates 
$[z_0:z_1:z_2]$, such that we recover in the chart $z_0=1$ the Baikov representation on the maximal cut given
in eq.~(\ref{def_Baikov_representation}).
We denote the twist function on projective space by $U([z_0:z_1:z_2])$, whereas the twist function in the affine chart $z_0=1$ is denoted by $u(z_1,z_2)$.
The latter has been given in eq.~(\ref{def_u}).
The twist function $U$ in homogeneous coordinates is given by
\bq
 U
 & = &
 P_0^{4\eps}
 P_1^{\eps}
 P_2^{\eps}
 P_3^{-\frac{1}{2}-\eps}
 P_4^{-\frac{1}{2}-\eps}
 P_5^{-\frac{1}{2}-\eps},
\eq
with
\begin{align}
 P_0 & = z_0,
 &
 P_3 & = z_2^2 -2 \left(1-\kinvar_1\right) z_0 z_2 + \left(1+\kinvar_1\right)^2 z_0^2,
 \nonumber \\
 P_1 & = z_1,
 &
 P_4 & = \left(z_2-z_1\right)^2 +2 \kinvar_2 z_0 \left( z_1+z_2\right) + \kinvar_2^2 z_0^2,
 \nonumber \\
 P_2 & = z_2,
 &
 P_5 & = z_1^2 +2 \left(\kinvar_3+\kinvar_4\right) z_0 z_1 + \left(\kinvar_3-\kinvar_4\right)^2 z_0^2.
\end{align}
We divide these polynomials into two sets, depending on whether the non-$\eps$ part of the exponent is $0$ or $-\frac{1}{2}$.
We set
\bq
 I^0_{\mathrm{even}} \;= \; \left\{ 0,1,2 \right\},
 & &
 I^0_{\mathrm{odd}} \;= \; \left\{ 3,4,5 \right\}.
\eq
The zero sets of the polynomials $P_0, \dots, P_5$ define hypersurfaces in ${\mathbb C}{\mathbb P}^2$.
The way in which these hypersurfaces intersect plays an important role in the computation of the corresponding Feynman integral.
In general we may have non-normal crossings and indeed we will encounter non-normal crossings in the specific case of the unequal-mass three-loop
banana integral.
In fig.~\ref{fig:twist_zero_set} we show 
\begin{figure}
\begin{center}
\includegraphics[scale=1.2]{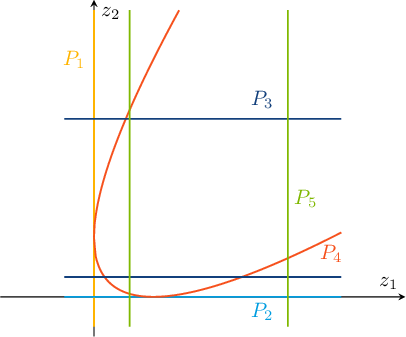}
\end{center}
\caption{
The (real) zero set of the polynomials $P_1, \dots, P_5$ in the chart $z_0=1$.
}
\label{fig:twist_zero_set}
\end{figure}
the (real) zero set of the polynomials $P_1, \dots, P_5$ in the chart $z_0=1$ for $(\kinvar_1,\kinvar_2,\kinvar_3,\kinvar_4)=(-4,-3,-5,-\frac{4}{5})$.
The zero sets of $\{P_1,P_4\}$ and $\{P_2,P_4\}$ have non-normal crossings.
Performing the analysis in all charts, we find that the zero sets of
\bq
 \{P_0,P_3\},
 \{P_0,P_4\},
 \{P_0,P_5\},
 \{P_1,P_4\},
 \{P_1,P_5\},
 \{P_2,P_3\},
 \{P_2,P_4\},
 \{P_0,P_1,P_5\},
 \{P_0,P_2,P_3\}
\eq
have non-normal crossings.

From eq.~(\ref{def_Baikov_representation}) we further read off
\bq
 \prebaikov
 & = &
 \frac{2^{6+6 \eps} \pi^{\frac{9}{2}} e^{3 \gamma_E \eps}}{\Gamma\left(\frac{1}{2}-\eps\right)^3}.
\eq
It is easily checked that with
\bq
 \preabs & = & \eps^3
\eq
the product $\prebaikov \preabs$ is pure of transcendental weight zero. Up to order $\eps^7$ we have
\bq
 \prebaikov \preabs
 & = &
 64 \pi^3 \eps^3 - 48 \pi^5 \eps^5 - 448 \pi^3 \zeta_3 \eps^6 + 10 \pi^7 \eps^7 + {\mathcal O}\left(\eps^8\right).
\eq
We are considering differential two-forms, which can be written as
\bq
\label{def_diff_form}
 \differentialform_{\mu_0 \dots \mu_5}\left[Q\right]
 & = &
 \prebaikov \preabs \prerel \preclutch
 U\left(z\right)
 \hat{\Phi}_{\mu_0 \dots \mu_5}\left[Q\right]
 \eta,
\eq
where
\bq
 \eta & = & z_0 dz_1 \wedge dz_2 - z_1 dz_0 \wedge dz_2 + z_2 dz_0 \wedge dz_1,
\eq
and $\hat{\Phi}_{\mu_0 \dots \mu_5}\left[Q\right]$ is of the form
\bq
 \hat{\Phi}_{\mu_0 \dots \mu_5}\left[Q\right]
 & = &
 \frac{Q\left(z\right)}{P_0^{\mu_0} P_1^{\mu_1} P_2^{\mu_2} P_3^{\mu_3} P_4^{\mu_4} P_5^{\mu_5}},
 \;\;\;\;\;\;
 \mu_j \in {\mathbb N}_0.
\eq
$Q(z)$ is a homogeneous polynomial in $z$ of degree $\mu_0+\mu_1+\mu_2+2(\mu_3+\mu_4+\mu_5)$.
The prefactor $\prerel$ is defined by
\bq
 \prerel
 & = &
 \frac{\Gamma\left(4 \eps +1\right)}{\Gamma\left(4 \eps +1-\mu_0\right)}
 \frac{\Gamma\left(\eps +1\right)}{\Gamma\left(\eps +1-\mu_1\right)}
 \frac{\Gamma\left(\eps +1\right)}{\Gamma\left(\eps +1-\mu_2\right)}
 \prod\limits_{i \in I_{\mathrm{odd}}^0} 
 \frac{\Gamma\left(-\frac{1}{2}-\eps +1\right)}{\Gamma\left(-\frac{1}{2}-\eps +1-\mu_i\right)},
\eq
the prefactor $\preclutch$ is defined by
\bq
\label{def_clutch_prefactor}
 \preclutch
 & = &
 \eps^{-|\mu|},
 \;\;\;\;\;\;
 \left|\mu\right| \; = \; \mu_0 + \mu_1 + \mu_2 + \mu_3 + \mu_4 + \mu_5.
\eq
The prefactors $\prerel$ and $\preclutch$ 
have been meticulously defined to trivialise the $\eps$-dependence of the integration-by-parts identities \cite{Bree:2025maw}.

We denote the vector space spanned by the differential forms of eq.~(\ref{def_diff_form}) by $\Agen^2_\omega$.
The quotient space, where we mod out all relations coming from integration-by-parts, distributivity and cancellations
between numerators and denominators is denoted by $\Hgen^2_{\omega}$.
The vector space $\Agen^2_\omega$ is infinite-dimensional,
the vector space $\Hgen^2_{\omega}$ is finite-dimensional.
To each differential two-form $\differentialform_{\mu_0 \dots \mu_5}[Q] \in \Agen^2_\omega$ we associate three numbers
$|\mu|, o,r \in {\mathbb N}_0$. 
The number $|\mu|$ has been defined above in eq.~(\ref{def_clutch_prefactor}),
the number $o$ denotes the pole order of $\differentialform_{\mu_0 \dots \mu_5}[Q]$, 
the number $r$ denotes the number of non-zero residues, which can be taken successively.
These three numbers define three filtrations $W_\bullet$, $\Fgeom^\bullet$ and $\Fcomb^\bullet$.
We have
\begin{alignat}{2}
 \differentialform_{\mu_0,\dots,\mu_5}[Q] & \in W_w \Agen^2_\omega & \quad \mbox{if} & \quad 2 + r \le w,
 \nonumber \\
 \differentialform_{\mu_0,\dots,\mu_5}[Q] & \in \Fgeom^{p} \Agen^2_\omega & \quad \mbox{if} & \quad 2+r-o \ge p,
 \nonumber \\
 \differentialform_{\mu_0,\dots,\mu_5}[Q] & \in \Fcomb^{p'} \Agen^2_\omega & \quad \mbox{if} & \quad 2-|\mu| \ge p'.
\end{alignat}
The filtration $W_\bullet$ is increasing
\bq
 \ldots \subseteq W_2 \Agen^2_\omega \subseteq W_3 \Agen^2_\omega \subseteq W_4 \Agen^2_\omega \subseteq \dots,
\eq
the filtrations $\Fgeom^\bullet$ and $\Fcomb^\bullet$ are decreasing
\bq
 \ldots \subseteq \Fgeom^{2} \Agen^2_\omega \subseteq \Fgeom^{1} \Agen^2_\omega \subseteq \Fgeom^{0} \Agen^2_\omega \subseteq \dots,
 \nonumber \\
 \ldots \subseteq \Fcomb^{2} \Agen^2_\omega \subseteq \Fcomb^{1} \Agen^2_\omega \subseteq \Fcomb^{0} \Agen^2_\omega \subseteq \dots.
\eq
We set
\bq
 \Ageom^{p,q} = \mathrm{Gr}^{p}_{\Fgeom} \mathrm{Gr}^W_{p+q} \Agen^2_\omega
 & \mbox{and} &
 \Acomb^{p',q'} = \mathrm{Gr}^{p'}_{\Fcomb} \mathrm{Gr}^W_{p'+q'} \Agen^2_\omega,
\eq
where the graded parts are defined by
\bq
 \mathrm{Gr}^W_{w} X = W_{w} X / W_{w-1} X
 & \mbox{and} &
 \mathrm{Gr}^{p}_{F} X = F^p X / F^{p+1} X.
\eq
We let $\Hgeom^{p,q}$ be generated by all master integrands $\differentialform \in \Ageom^{p,q}$
and $\Hcomb^{p',q'}$ be generated by all master integrands $\differentialform \in \Acomb^{p',q'}$.
We further set $\hgeom^{p,q}=\dim \Hgeom^{p,q}$.
The Feynman integrals on the maximal cut inherit this labelling.
We denote by $V_{\mathrm{geom}}^{p,q}$ the vector space generated by the master integrals, whose integrands belong
to $\Hgeom^{p,q}$, and by
$V_{\mathrm{comb}}^{p,q}$ the vector space generated by the master integrals, whose integrands belong
to $\Hcomb^{p,q}$.

\newpage
The $W_\bullet$-filtration and the $\Fgeom^\bullet$-filtration decompose the $13$-dimensional vector space $\Hgen^2_\omega$ into smaller pieces and we find
\bq
\begin{picture}(120,30)(0,30)
\Text(10,10)[c]{$\hgeom^{2,0}$}
\Text(50,10)[c]{$\hgeom^{1,1}$}
\Text(90,10)[c]{$\hgeom^{0,2}$}
\Text(30,30)[c]{$\hgeom^{2,1}$}
\Text(70,30)[c]{$\hgeom^{1,2}$}
\Text(50,50)[c]{$\hgeom^{2,2}$}
\end{picture}
 & = &
\begin{picture}(120,30)(-20,30)
\Text(10,10)[c]{$1$}
\Text(50,10)[c]{$4$}
\Text(90,10)[c]{$1$}
\Text(30,30)[c]{$0$}
\Text(70,30)[c]{$0$}
\Text(50,50)[c]{$7$}
\end{picture}
 \\
 & & \nonumber \\
 & & \nonumber
\eq
We have $\hgeom^{2,2}=7$, but note this includes the two super-sectors $31$ and $47$ with one master integral each.
Thus we expect five master integrals from sector $15$ in $\Hgeom^{2,2}$ 
and the vector space $V^2$ of Feynman integrals on the maximal cut decomposes as
\bq
\begin{picture}(120,30)(-20,30)
\Text(10,10)[c]{$1$}
\Text(50,10)[c]{$4$}
\Text(90,10)[c]{$1$}
\Text(30,30)[c]{$0$}
\Text(70,30)[c]{$0$}
\Text(50,50)[c]{$5$}
\end{picture}
 \\
 & & \nonumber \\
 & & \nonumber
\eq
In the Laporta algorithm we use the sequence of integers
\bq
 \laportaorder,
\eq
where the dots stand for further criteria needed to distinguish inequivalent integrands.
The sequence of integers is used with lexicographic ordering,
i.e. the relation $a_1 < a_2$ implies $\differentialform_1 < \differentialform_2$, with ties broken by $w$, etc.. 
The entries in this sequence of integers are defined as follows:
We first consider (recursively) all sub-problems localised on $\Divisor_i=0$ with $i \in I_{\mathrm{even}}^0$.
Localisation means to take a residue at $\Divisor_i=0$.
These are simpler problems, as the number of Baikov variables is reduced by one.
We set
\bq
 a & = & \left\{ \begin{array}{rl}
 -r & \mbox{if $\differentialform$ is the pre-image of a master integrand of a sub-problem} \\
    & \mbox{localised on $\Divisor_i=0$ with $i \in I_{\mathrm{even}}^0$,} \\
 0 & \mbox{otherwise.} \\
 \end{array}
 \right.
\eq
For a Baikov representation with two variables the (Hodge) weight is defined by $w=2+r$.
The integers $r$, $o$ and $|\mu|$ denote respectively the number of sequential non-zero residues, the pole order and the sum of the indices $\mu_i$
and have been defined before.

We denote by $\differentialform=(\differentialform_5,\dots,\differentialform_{17})^T$
the basis of master integrands obtained from this algorithm.
This basis satisfies a differential equation as in eq.~(\ref{eq:Laurent}).
Moreover, if $\differentialform_i \in \mathrm{Gr}_{\Fcomb}^{2-\absmu_i} \Agen^2_\omega$
and $\differentialform_j \in \mathrm{Gr}_{\Fcomb}^{2-\absmu_j} \Agen^2_\omega$, then
\bq
\label{refined_statement}
 A_{ij}\left(\eps,x\right)
 & = &
 \sum\limits_{k=-(\absmu_i-\absmu_j)}^1
 \eps^k A^{(k)}_{ij}\left(x\right).
\eq
We call a differential equation which satisfies eq.~(\ref{refined_statement}) an $\Fgen^\bullet$-compatible differential equation for the filtration $\Fcomb^\bullet$.

\subsection{Weight four}

We start with differential forms, where we may take two consecutive residues.
These differential forms are of weight four and belong to $\Hgeom^{2,2}$.
In order to take a residue, one of the indices $\mu_0$, $\mu_1$ or $\mu_2$ of the even polynomials needs to be positive.
It is not too difficult to see that this will result in a differential two-form
where another residue can be taken, as one of the odd polynomials will become a perfect square in the limit where the even polynomial
vanishes.

Let us write down all candidates for $\hat{\Phi}$ with $|\mu|\le 2$:
\bq
 \hat{\Phi}
 & \in &
 \left\{
  \frac{z_1}{z_0},
  \frac{z_2}{z_0},
  \frac{z_0}{z_1},
  \frac{z_2}{z_1},
  \frac{z_0}{z_2},
  \frac{z_1}{z_2},
  \frac{z_2^2}{z_0 z_1},
  \frac{z_1^2}{z_0 z_2},
  \frac{z_0^2}{z_1 z_2}
 \right\}.
\eq
With respect to the $\Fcomb$-filtration, the first six candidates have $|\mu|=1$ and belong to $\Fcomb^1 \Agen^2_\omega$, 
while  the last three candidates have $|\mu|=2$ and belong to $\Fcomb^0 \Agen^2_\omega$.
Out of these nine possibilities only seven are independent.
The order relation will select master integrands with lowest pole order.
With two residues, we will need at least pole order two.
As $\{P_0,P_1,P_5\}$ and $\{P_0,P_2,P_3\}$ have non-normal crossing singularities,
the candidates $z_2^2/(z_0z_1)$ and $z_1^2/(z_0z_2)$ have pole order higher than two.
The order relation will eliminate those in favour of the other ones.
Thus $\Hgeom^{(2,2)}$ is generated by
\bq
 \left\{
 \differentialform_{100000}\left[z_1\right],
 \differentialform_{100000}\left[z_2\right],
 \differentialform_{010000}\left[z_0\right],
 \differentialform_{010000}\left[z_2\right],
 \differentialform_{001000}\left[z_0\right],
 \differentialform_{001000}\left[z_1\right],
 \differentialform_{011000}\left[z_0^2\right]
 \right\}.
\eq
The last five entries correspond to integrals in super-sectors (sectors $31$, $47$ or $63$).
The sector $63$ has no master integrals, the sectors $31$ and $47$ have one master integral each.
We take $\differentialform_{010000}\left[z_0\right]$ as a representative for the master integral in sector $31$
and $\differentialform_{001000}\left[z_0\right]$ as a representative for the master integral in sector $47$.
Out of the last five entries we can form three independent linear combinations such that these linear combinations
vanish on the Feynman integral side on all super-sectors.
These linear combinations are
\bq
 \differentialform_{010000}\left[z_2+\kinvar_2z_0\right],
 \;\;
 \differentialform_{001000}\left[z_1+\kinvar_2z_0\right],
 \;\;
 \differentialform_{011000}\left[\kinvar_2z_0^2+z_0\left(z_1+z_2\right)\right].
\eq
Projecting these linear combinations to sector $15$ gives us three basis vectors of $V_{\mathrm{geom}}^{2,2}$.
Together with the first two candidates ($\differentialform_{100000}\left[z_1\right]$ and $\differentialform_{100000}\left[z_2\right]$) we obtain in total five basis vectors, as advertised.
The projection onto sector $15$ depends on the choice of a basis of master integrals.

\subsection{Weight three}

There are no master integrands at weight three (i.e. differential forms with just one non-vanishing residue).
The reason was already given above: If either $\mu_0$, $\mu_1$ or $\mu_2$ is positive,
one of the odd polynomials will become a perfect square in the limit where the even polynomial
vanishes and we may take a second residue.

\subsection{Weight two}

The master integrands at weight two do not have non-vanishing residues. We order them by the pole order.

\subsubsection{Pole order zero}

At pole order zero we have $\mu_0=\dots=\mu_5=0$. The polynomial $Q$ is then of degree $0$.
Thus the only choice is
\bq
 \hat{\Phi} & = & 1.
\eq
In this case we have $\prerel=1$ and $\preclutch=1$.
Therefore
\bq
 \differentialform_{000000}\left[1\right]
 & = &
 \prebaikov \eps^3
 U\left(z\right)
 \eta,
\eq
and
\bq
 \iota\left( \eps^3 I_{111100000} \right) & = & \differentialform_{000000}\left[1\right].
\eq

\subsubsection{Pole order one}

At pole order one (and zero residues), we have to increase one of the indices $\mu_3$ $\mu_4$ or $\mu_5$ to one.
Then $\hat{\Phi}$ is of the form
\bq
 \hat{\Phi}
 & = &
 \frac{Q_i}{P_i},
 \;\;\;\;\;\; i \in I^0_{\mathrm{odd}},
\eq
where $Q_i$ is a homogeneous polynomial of degree two.
Integration-by-parts relations provide relations among these possibilities and we find that four of them are independent.
We make the choice
\bq
 \hat{\Phi}
 & \in &
 \left\{
  \frac{2\kinvar_1z_0\left[z_2+\left(1+\kinvar_1\right)z_0\right]}{P_3},
  \frac{2\kinvar_2z_0\left[\left(z_1+z_2\right)+\kinvar_2z_0\right]}{P_4},
 \right.
 \nonumber \\
 & &\hspace{0.8em} 
 \left.
  \frac{2\kinvar_3z_0\left[z_1+\left(\kinvar_3-\kinvar_4\right)z_0\right]}{P_5},
  \frac{2\kinvar_4z_0\left[z_1-\left(\kinvar_3-\kinvar_4\right)z_0\right]}{P_5}
 \right\}.
\eq
For all four cases we have $\prerel=-\frac{1}{2}-\eps$ and $\preclutch=\frac{1}{\eps}$.
We have chosen these forms such that
\bq
 \differentialform_{000100}\left[2\kinvar_1z_0\left(z_2+\left(1+\kinvar_1\right)z_0\right)\right]
 & = & 
 \frac{\kinvar_1}{\eps}
 \frac{\partial}{\partial \kinvar_1} \differentialform_{000000}\left[1\right],
 \nonumber \\
 \differentialform_{000010}\left[2\kinvar_2z_0\left(\left(z_1+z_2\right)+\kinvar_2z_0\right)\right]
 & = & 
 \frac{\kinvar_2}{\eps}
 \frac{\partial}{\partial \kinvar_2} \differentialform_{000000}\left[1\right],
 \nonumber \\
 \differentialform_{000001}\left[2\kinvar_3z_0\left(z_1+\left(\kinvar_3-\kinvar_4\right)z_0\right)\right]
 & = & 
 \frac{\kinvar_3}{\eps}
 \frac{\partial}{\partial \kinvar_3} \differentialform_{000000}\left[1\right],
 \nonumber \\
 \differentialform_{000001}\left[2\kinvar_4z_0\left(z_1-\left(\kinvar_3-\kinvar_4\right)z_0\right)\right]
 & = & 
 \frac{\kinvar_4}{\eps}
 \frac{\partial}{\partial \kinvar_4} \differentialform_{000000}\left[1\right].
\eq

\subsubsection{Pole order two}

It remains to find a generator for $H^{0,2}_\omega$. This space is one-dimensional and generated by the second derivatives.
We take the symmetric combination
\bq
 \frac{1}{16 \eps^2}
 \left( \sum\limits_{i=1}^4 \kinvar_i \frac{\partial}{\partial \kinvar_i} \right)^2 \differentialform_{000000}\left[1\right].
\eq

\subsection{A basis for the twisted cohomology group}

Let us summarise the generators for the twisted cohomology group $\Hgen_\omega^2$.
We have $\dim \Hgen_\omega^2 = 13$.
We label the generators $\differentialform_5, \dots, \differentialform_{17}$, to ease the translation to the basis
$J$, where we reserve the first four master integrals $J_1,\dots,J_4$ for the tadpoles.
A possible basis of $\Hgen_\omega^2$ is given by
\begin{align}
 \differentialform_5 & = \differentialform_{000000}\left[1\right],
 &
 \differentialform_{10} & = \differentialform_{100000}\left[z_1\right],
 \nonumber \\
 \differentialform_6 & = \frac{\kinvar_1}{\eps} \frac{\partial}{\partial \kinvar_1} \differentialform_{000000}\left[1\right],
 &
 \differentialform_{11} & = \differentialform_{100000}\left[z_2\right],
 \nonumber \\
 \differentialform_7 & = \frac{\kinvar_2}{\eps} \frac{\partial}{\partial \kinvar_2} \differentialform_{000000}\left[1\right],
 &
 \differentialform_{12} & = \differentialform_{010000}\left[z_2+\kinvar_2z_0\right],
 \nonumber \\
 \differentialform_8 & = \frac{\kinvar_3}{\eps} \frac{\partial}{\partial \kinvar_3} \differentialform_{000000}\left[1\right],
 &
 \differentialform_{13} & = \differentialform_{001000}\left[z_1+\kinvar_2z_0\right],
 \nonumber \\
 \differentialform_9 & = \frac{\kinvar_4}{\eps} \frac{\partial}{\partial \kinvar_4} \differentialform_{000000}\left[1\right],
 &
 \differentialform_{14} & = \differentialform_{011000}\left[\kinvar_2z_0^2+z_0\left(z_1+z_2\right)\right],
 \nonumber \\
 \differentialform_{15} & =  \frac{1}{16 \eps^2}
 \left( \sum\limits_{i=1}^4 \kinvar_i \frac{\partial}{\partial \kinvar_i} \right)^2 \differentialform_{000000}\left[1\right],
 &
 \differentialform_{16} & = \differentialform_{010000}\left[z_0\right],
 \nonumber \\
 & &
 \differentialform_{17} & = \differentialform_{001000}\left[z_0\right].
\end{align}

\subsection{The master integrals of the basis $J$}

The basis $J$ consists of four master integrals ($J_1,\dots,J_4$) corresponding to the tadpoles and eleven master integrals spanning the vector space $V^2$.
For each of the thirteen integrands $\differentialform_5, \dots, \differentialform_{17}$ we obtain a Feynman integral by integration over the Baikov integration domain.
We are not interested in the Feynman integrals corresponding to $\differentialform_{16}$ and $\differentialform_{17}$, these translate to the integrals $I_{111110000}$ and $I_{111101000}$
and correspond to the master integrals of the super-sectors $31$ and $47$, respectively.
A direct translation of the integrands $\differentialform_{12}-\differentialform_{14}$ to Feynman integrals will give Feynman integrals, 
which live in the super-sectors $31$, $47$ and $63$.
However, we may always subtract a linear combination of $I_{111110000}$ and $I_{111101000}$ such that they vanish on the super-sectors $31$, $47$ and $63$.

The master integrals $J_{10},\dots,J_{14}$ correspond to integrands, where we may take two consecutive residues.
Normalising them by their leading singularity will directly give an $\eps$-factorised form in the $5\times 5$-subsystem.

In summary, we may take the intermediate basis $J$ as
\bq
 J_1
 & = & 
 \eps^3 I_{011100000},
 \nonumber \\
 J_2
 & = & 
 \eps^3 I_{101100000},
 \nonumber \\
 J_3
 & = & 
 \eps^3 I_{110100000},
 \nonumber \\
 J_4
 & = & 
 \eps^3 I_{111000000},
 \nonumber \\
 J_5
 & = & 
 \eps^3 \; I_{111100000}
 \nonumber \\
 J_6
 & = & 
 \frac{1}{\eps} \kinvar_1 \frac{\partial}{\partial \kinvar_1} J_5,
 \nonumber \\
 J_7
 & = & 
 \frac{1}{\eps} \kinvar_2 \frac{\partial}{\partial \kinvar_2} J_5,
 \nonumber \\
 J_8
 & = & 
 \frac{1}{\eps} \kinvar_3 \frac{\partial}{\partial \kinvar_3} J_5,
 \nonumber \\
 J_9
 & = & 
 \frac{1}{\eps} \kinvar_4 \frac{\partial}{\partial \kinvar_4}  J_5,
 \nonumber \\
 J_{10}
 & = & 
 \left.
 \eps^3 \left[ 6 I_{1111(-1)0000} - 8 I_{11110(-1)000} + 3 I_{111100000} \right] \right|_{\mathrm{sector} \; 15},
 \nonumber \\
 J_{11}
 & = & 
 \left.
 \eps^3 \left[3 I_{1111(-1)0000} - I_{11110(-1)000} + 3 \kinvar_2 I_{111100000}
 \right] \right|_{\mathrm{sector} \; 15},
 \nonumber \\
 J_{12}
 & = & 
 \eps^3 \left( 1+\kinvar_1 \right) 
 \left. \left( 3 I_{1111(-1)1000} + 3 \kinvar_2 I_{111101000} - I_{111100000} \right)\right|_{\mathrm{sector} \; 15},
 \nonumber \\
 J_{13}
 & = & 
 \eps^3 \left( \kinvar_3-\kinvar_4 \right) \left. \left( I_{11111(-1)000} + \kinvar_2 I_{111110000} \right)\right|_{\mathrm{sector} \; 15},
 \nonumber \\
 J_{14}
 & = & 
 \eps^3 \left( 1+\kinvar_1 \right) \left( \kinvar_3-\kinvar_4 \right) \left. \left( \kinvar_2 I_{111111000} + I_{111110000} + I_{111101000} \right)\right|_{\mathrm{sector} \; 15},
 \nonumber \\
 J_{15}
 & = & 
 \frac{1}{16 \eps^2} \left[ \kinvar_1 \frac{\partial}{\partial \kinvar_1} + \kinvar_2 \frac{\partial}{\partial \kinvar_2} + \kinvar_3 \frac{\partial}{\partial \kinvar_3} + \kinvar_4 \frac{\partial}{\partial \kinvar_4} \right]^2 J_5.
\eq
For the master integrals $J_{10},\dots,J_{14}$ we have taken
appropriate linear combination, such that 
in the $m_3^2=m_4^2$-limit the master integrals $J_{13}$ and $J_{14}$ go to zero.
In the $m_2^2=m_3^2=m_4^2$-limit in addition the master integrals $J_{11}$ and $J_{12}$ go to zero.
Finally, in the equal mass limit ($m_1^2=m_2^2=m_3^2=m_4^2$) in addition 
the master integral $J_{10}$ goes to zero.
This trivialises the boundary condition in the equal mass limit for the master integrals $J_{10},\dots,J_{14}$.

We recall that the master integrals $J_{12},\dots,J_{14}$ are a priori integrals in super-sectors 
(sectors $31$, $47$ and $63$, respectively). 
However, they are constructed such that they vanish on the maximal cuts of the super-sectors.
In the definition of $J_{10},\dots,J_{14}$ we take the projection to the sector $15$,
e.g. we set to zero any integral for which there is a $\nu_j \le 0$ with $j \in \{1,2,3,4\}$.
In particular we project the tadpoles $J_1,\dots,J_4$ to zero.
As mentioned above, the projection onto sector $15$ depends on the choice of master integrals in the basis $I$.
The explicit expressions for $J_{10},\dots,J_{14}$ in terms of the basis $I$ are given in appendix~\ref{appendix:J_10_J_14}.
We remark that although for the master integrals $J_{10}$ and $J_{11}$ the projection to the sector $15$ is not strictly necessary, it
is extremely convenient to do so:
We observe that this eliminates in the differential equation any non-$\eps^1$-contributions from the tadpoles to the top sector.

The differential equation for $J$ has the form
\bq
 d J\left(\eps,\kinvar\right) & = & \hat{A}\left(\eps,\kinvar\right) J\left(\eps,\kinvar\right) \; = \; \sum\limits_{k=-2}^{1} \eps^k \hat{A}^{(k)}\left(\kinvar\right) J\left(\eps,\kinvar\right),
\eq
where $\hat{A}^{(k)}(\kinvar)$ is independent of $\eps$.
Moreover, the entries on the maximal cut are compatible with the $\Fcomb^\bullet$-filtration.
We have
\begin{align}
 J_{5} & \in \Fcomb^2 \Agen^2_\omega,
 &
 J_{6}, \dots, J_{13} & \in \Fcomb^1 \Agen^2_\omega,
 &
 J_{14}, J_{15} & \in \Fcomb^0 \Agen^2_\omega.
\end{align}
In eq.~(\ref{A_ldegree}) we show the lowest non-vanishing order in $\eps$ for each entry
of the connection matrix.
\bq
\label{A_ldegree}
 \left(
 \begin{array}{rrrr|r|rrrrrrrr|rr}
 1 & - & - & - & - & - & - & - & - & - & - & - & - & - & - \\
 - & 1 & - & - & - & - & - & - & - & - & - & - & - & - & - \\
 - & - & 1 & - & - & - & - & - & - & - & - & - & - & - & - \\
 - & - & - & 1 & - & - & - & - & - & - & - & - & - & - & - \\
 \hline
 - & - & - & - & - & 1 & 1 & 1 & 1 & - & - & - & - & - & - \\
 \hline
 - & - & - & - & -1 & 0 & 0 & 0 & 0 & 1 & 1 & 1 & 1 & 1 & 1 \\
 - & - & - & - & -1 & 0 & 0 & 0 & 0 & 1 & 1 & 1 & 1 & 1 & 1 \\
 - & - & - & - & -1 & 0 & 0 & 0 & 0 & 1 & 1 & 1 & 1 & 1 & 1 \\
 - & - & - & - & -1 & 0 & 0 & 0 & 0 & 1 & 1 & 1 & 1 & 1 & 1 \\
 1 & 1 & 1 & 1 & 0 & 1 & 1 & 1 & 1 & 1 & 1 & 1 & 1 & - & - \\
 1 & 1 & 1 & 1 & 0 & 1 & 1 & 1 & 1 & 1 & 1 & 1 & 1 & - & - \\
 1 & 1 & 1 & 1 & 0 & 1 & 1 & 1 & 1 & - & 1 & 1 & - & 1 & - \\
 1 & 1 & 1 & 1 & 0 & 1 & 1 & 1 & 1 & 1 & 1 & - & 1 & 1 & - \\
 \hline
 1 & - & 1 & 1 & 0 & 1 & 1 & 1 & 1 & - & - & 1 & 1 & 1 & - \\
 1 & 1 & 1 & 1 & -2 & -1 & -1 & -1 & -1 & 0 & 0 & 0 & 0 & 0 & 0 \\
 \end{array}
 \right).
\eq
The symbol ``$-$'' indicates that the corresponding entry is zero.
In eq.~(\ref{A_ldegree}) we also indicated the block structure due to the $\Fcomb^\bullet$-filtration.
As a side we note that
the $4 \times 4$-block corresponding to $J_1,\dots,J_4$
and the $5 \times 5$-block corresponding to $J_{10},\dots,J_{14}$ are already in $\eps$-factorised form.

% -----------------------------------------------------------------------------

\section{The construction of the basis $K$}
\label{sect:step_2}

In a second step we go from the basis $J$, which has an $\Fcomb^\bullet$-compatible differential equation, to a basis $K$ with an $\eps$-factorised differential equation.
To this aim we construct a rotation matrix $R_2$ such that the differential equation for
the basis $K = R_2^{-1} J$ is in $\eps$-factorised form:
\bq
 d K\left(\eps,\kinvar\right) & = & \eps \tilde{A}(\kinvar) K\left(\eps,\kinvar\right).
\eq
The rotation matrix $R_2$ is given as the product of simpler matrices
\bq
 R_2 & = & R_2^{(-2)} R_2^{(-1)} R_2^{(0)}.
\eq
The product $R_2^{(-2)} R_2^{(-1)} R_2^{(0)}$ puts the top sector into an $\eps$-factorised form.
The individual rotations $R_2^{(-2)}$, $R_2^{(-1)}$ and $R_2^{(0)}$
eliminate systematically terms of order $\eps^{-2}$, $\eps^{-1}$ and $\eps^{0}$.
The matrix $\hat{A}$ restricted to the maximal cut has a block structure
induced by the $\Fcomb^\bullet$-filtration and indicated by lines in eq.~(\ref{A_ldegree}).
We define a $B$-order by picking in each block terms of a certain order in $\eps$ as follows:
\begin{align}
 &\mbox{$B$-order}\ -2
 :
 \scalebox{0.9}{$\left( \begin{array}{ccc}
 0 & - & - \\
 -1 & 0 & - \\
 -2 & -1 & 0 \\
 \end{array} \right)$},
 & 
 &\mbox{$B$-order}\ -1 
 :
  \scalebox{0.9}{$\left( \begin{array}{ccc}
 - & - & - \\
 0 & - & - \\
 -1 & 0 & - \\
 \end{array} \right)$},
 \nonumber 
 & \mbox{$B$-order}\ 0 
 : 
 \;\;\;\;
  \scalebox{0.9}{$\left( \begin{array}{ccc}
 - & - & - \\
 - & - & - \\
 0 & - & - \\
 \end{array} \right)$}.
\end{align}
The rotation $R_2^{(-2)}$ removes terms of $B$-order $(-2)$, 
the rotation $R_2^{(-1)}$ removes terms of $B$-order $(-1)$ 
and the rotation $R_2^{(0)}$ removes terms of $B$-order $0$.

For each of these rotations we make an ansatz with $\eps$-independent functions 
of the kinematic variables $\kinvar$.
Requiring that these rotations remove the unwanted terms gives us a system of first-order $\eps$-independent 
differential equations for the unknown functions of the ansatz.
For an $\Fcomb^\bullet$-compatible differential equation
with three non-trivial parts in the filtration
\bq
 \emptyset = \Fcomb^3 V^2 \subseteq \Fcomb^2 V^2 \subseteq \Fcomb^1 V^2 \subseteq \Fcomb^0 V^2 = V^2
\eq
we work out these differential equations in appendix~\ref{appendix:dgl_step_2}.
Solving these differential equations gives us the functions of the ansatz and hence the rotation matrices
$R_2^{(j)}$.
It is worth noting that these differential equations have a triangular structure, e.g. we may start 
with solving a smaller closed subsystem involving fewer functions and gradually work our way up.
In the next subsections we discuss the system of  differential equations for the case of the unequal-mass three-loop banana integral.

Note that there is no need to adjust the tadpole contributions to the top sector, as these contributions are already of order $\eps^1$ and this is preserved under
the above transformation.

\subsection{The transformation $R_2^{(-2)}$}

The transformation $R_2^{(-2)}$ removes terms of $B$-order $(-2)$.
The ansatz for $R_2^{(-2)}$ reads
{\footnotesize
\bq
\label{Rm2_ldegree}
 \left(
 \begin{array}{rrrr|r|rrrrrrrr|rr}
 1 & 0 & 0 & 0 & 0 & 0 & 0 & 0 & 0 & 0 & 0 & 0 & 0 & 0 & 0 \\
 0 & 1 & 0 & 0 & 0 & 0 & 0 & 0 & 0 & 0 & 0 & 0 & 0 & 0 & 0 \\
 0 & 0 & 1 & 0 & 0 & 0 & 0 & 0 & 0 & 0 & 0 & 0 & 0 & 0 & 0 \\
 0 & 0 & 0 & 1 & 0 & 0 & 0 & 0 & 0 & 0 & 0 & 0 & 0 & 0 & 0 \\
 \hline
 0 & 0 & 0 & 0 & R^{(-2)}_{55} & 0 & 0 & 0 & 0 & 0 & 0 & 0 & 0 & 0 & 0 \\
 \hline
 0 & 0 & 0 & 0 & \frac{1}{\eps} R^{(-2)}_{65} & R^{(-2)}_{66} & R^{(-2)}_{67} & R^{(-2)}_{68} & R^{(-2)}_{69} & 0 & 0 & 0 & 0 & 0 & 0 \\
 0 & 0 & 0 & 0 & \frac{1}{\eps} R^{(-2)}_{75} & R^{(-2)}_{76} & R^{(-2)}_{77} & R^{(-2)}_{78} & R^{(-2)}_{79} & 0 & 0 & 0 & 0 & 0 & 0 \\
 0 & 0 & 0 & 0 & \frac{1}{\eps} R^{(-2)}_{85} & R^{(-2)}_{86} & R^{(-2)}_{87} & R^{(-2)}_{88} & R^{(-2)}_{89} & 0 & 0 & 0 & 0 & 0 & 0 \\
 0 & 0 & 0 & 0 & \frac{1}{\eps} R^{(-2)}_{95} & R^{(-2)}_{96} & R^{(-2)}_{97} & R^{(-2)}_{98} & R^{(-2)}_{99} & 0 & 0 & 0 & 0 & 0 & 0 \\
 0 & 0 & 0 & 0 & 0 & 0 & 0 & 0 & 0 & 1 & 0 & 0 & 0 & 0 & 0 \\
 0 & 0 & 0 & 0 & 0 & 0 & 0 & 0 & 0 & 0 & 1 & 0 & 0 & 0 & 0 \\
 0 & 0 & 0 & 0 & 0 & 0 & 0 & 0 & 0 & 0 & 0 & 1 & 0 & 0 & 0 \\
 0 & 0 & 0 & 0 & 0 & 0 & 0 & 0 & 0 & 0 & 0 & 0 & 1 & 0 & 0 \\
 \hline
 0 & 0 & 0 & 0 & 0 & 0 & 0 & 0 & 0 & 0 & 0 & 0 & 0 & 1 & 0 \\
 0 & 0 & 0 & 0 & \frac{1}{\eps^2} R^{(-2)}_{F5} & \frac{1}{\eps} R^{(-2)}_{F6} & \frac{1}{\eps} R^{(-2)}_{F7}  & \frac{1}{\eps} R^{(-2)}_{F8}  & \frac{1}{\eps} R^{(-2)}_{F9}  & 0 & 0 & 0 & 0 & R^{(-2)}_{FE} & R^{(-2)}_{FF} \\
 \end{array}
 \right).
\eq
}
Here we used hexadecimal notation for the indices and we indicated the block structure 
due to the $\Fcomb^\bullet$-filtration.
It is convenient to rename the entry $R^{(-2)}_{55}$:
\bq
 R^{(-2)}_{55} & = & \Frobeniusbasis.
\eq
A posteriori we can show that $\Frobeniusbasis$ is a period of the K3-surface and this motivates the notation.
However, we stress that we never need this information.

Requiring that the terms of order $\eps^0$ of the entry $(5,5)$ in the transformed differential equation vanish yields
\begin{align}
\label{def_Rm2_65_to_75}
 R^{(-2)}_{65} & = \kinvar_1 \frac{\partial}{\partial \kinvar_1} \Frobeniusbasis,
 &
 R^{(-2)}_{75} & = \kinvar_2 \frac{\partial}{\partial \kinvar_2} \Frobeniusbasis,
 &
 R^{(-2)}_{85} & = \kinvar_3 \frac{\partial}{\partial \kinvar_3} \Frobeniusbasis,
 &
 R^{(-2)}_{95} & = \kinvar_4 \frac{\partial}{\partial \kinvar_4} \Frobeniusbasis.
\end{align}
In the next step we look at the terms of order $\eps^{-1}$ of the entries $(6,5), (7,5), (8,5), (9,5)$.
This allows us to express all second derivatives of $\Frobeniusbasis$ in terms of $\Frobeniusbasis$,
first derivatives of $\Frobeniusbasis$ and $R^{(-2)}_{F5}$.
We find for example
\bq
\label{eq_ddpsi0}
 \kinvar_1 \frac{\partial^2}{\partial \kinvar_1^2} \Frobeniusbasis
 & = &
 - \Frobeniusbasis
 - \left(2\kinvar_1+1\right) \frac{\partial}{\partial \kinvar_1} \Frobeniusbasis
 - 2 \kinvar_2 \frac{\partial}{\partial \kinvar_2} \Frobeniusbasis
 - 2 \kinvar_3 \frac{\partial}{\partial \kinvar_3} \Frobeniusbasis
 - 2 \kinvar_4 \frac{\partial}{\partial \kinvar_4} \Frobeniusbasis
 - 16 R^{(-2)}_{F5},
 \nonumber \\
 \kinvar_2 \frac{\partial^2}{\partial \kinvar_2^2} \Frobeniusbasis
 & = &
 - \Frobeniusbasis
 - 2 \kinvar_1 \frac{\partial}{\partial \kinvar_1} \Frobeniusbasis
 - \left(2\kinvar_2+1\right)  \frac{\partial}{\partial \kinvar_2} \Frobeniusbasis
 - 2 \kinvar_3 \frac{\partial}{\partial \kinvar_3} \Frobeniusbasis
 - 2 \kinvar_4 \frac{\partial}{\partial \kinvar_4} \Frobeniusbasis
 - 16 R^{(-2)}_{F5},
 \nonumber \\
 \kinvar_3 \frac{\partial^2}{\partial \kinvar_3^2} \Frobeniusbasis
 & = &
 - \Frobeniusbasis
 - 2 \kinvar_1 \frac{\partial}{\partial \kinvar_1} \Frobeniusbasis
 - 2 \kinvar_2 \frac{\partial}{\partial \kinvar_2} \Frobeniusbasis
 - \left(2\kinvar_3+1\right) \frac{\partial}{\partial \kinvar_3} \Frobeniusbasis
 - 2 \kinvar_4 \frac{\partial}{\partial \kinvar_4} \Frobeniusbasis
 - 16 R^{(-2)}_{F5},
 \nonumber \\
 \kinvar_4 \frac{\partial^2}{\partial \kinvar_4^2} \Frobeniusbasis
 & = &
 - \Frobeniusbasis
 - 2 \kinvar_1 \frac{\partial}{\partial \kinvar_1} \Frobeniusbasis
 - 2 \kinvar_2 \frac{\partial}{\partial \kinvar_2} \Frobeniusbasis
 - 2 \kinvar_3 \frac{\partial}{\partial \kinvar_3} \Frobeniusbasis
 - \left(2\kinvar_4+1\right) \frac{\partial}{\partial \kinvar_4} \Frobeniusbasis
 - 16 R^{(-2)}_{F5}.
\eq
The expressions for the mixed derivatives are more lengthy.

From the terms of order $\eps^{-2}$ of the entry $(15,5)$ we obtain the first derivatives of $R^{(-2)}_{F5}$
in terms of $\Frobeniusbasis$,
first derivatives of $\Frobeniusbasis$ and $R^{(-2)}_{F5}$.
The resulting expressions are again rather lengthy.

From eq.~(\ref{eq_ddpsi0}) we may express $R^{(-2)}_{F5}$ in terms of $\Frobeniusbasis$ and its first and second derivatives.
The symmetric expression reads
\bq
\label{def_Rm2_F5}
 R^{(-2)}_{F5}
 & = &
 - \frac{1}{64} \sum_{i=1}^4\left( 
  \kinvar_i \frac{\partial^2}{\partial \kinvar_i^2} + (8y_i + 1) \frac{\partial}{\partial \kinvar_i} +1 \right) \Frobeniusbasis
\eq
We then obtain a set of (linear) differential equations for $\Frobeniusbasis$ alone.
These differential equations generate a differential ideal, which we call the Picard-Fuchs ideal for $\Frobeniusbasis$.
It generalises the concept of a Picard-Fuchs operator from the univariate case to the multivariate case.

Next we turn to the sixteen functions $R^{(-2)}_{66}, \dots, R^{(-2)}_{99}$.
From the terms of order $\eps^0$ of the entries at the positions $(6,6)$, $\dots$, $(9,9)$ 
we determine the first derivatives of these functions in terms of the 
functions $R^{(-2)}_{66}$, $\dots$, $R^{(-2)}_{99}$ and $R^{(-2)}_{F6}, \dots, R^{(-2)}_{F9}$.
We note that the expressions for the first derivatives are linear in
$R^{(-2)}_{66}, \dots, R^{(-2)}_{99}$ and $R^{(-2)}_{F6}, \dots, R^{(-2)}_{F9}$.
From the terms of order $\eps^{-1}$ of the entries at the positions $(F,6), \dots, (F,9)$ we determine the first derivatives of  
$R^{(-2)}_{F6}, \dots, R^{(-2)}_{F9}$.
Again, we find that the derivatives are linear in
$R^{(-2)}_{66}, \dots, R^{(-2)}_{99}$ and $R^{(-2)}_{F6}, \dots, R^{(-2)}_{F9}$.
We may eliminate the functions $R^{(-2)}_{F6}, \dots, R^{(-2)}_{F9}$. 
We find
\bq
\label{def_Rm2_F6_to_F9}
 R^{(-2)}_{Fj}
 & = &
 - \frac{1}{64} \sum\limits_{i=1}^4 \left(
                                \frac{\partial R^{(-2)}_{(5+i)j}}{\partial \kinvar_i} 
                                + \frac{R^{(-2)}_{(5+i)j}}{\Frobeniusbasis} \frac{\partial \Frobeniusbasis}{\partial \kinvar_i}
                                + 8 R^{(-2)}_{(5+i)j} \right), 
 \;\;\;\;\;\;
 j \; \in \; \{6,7,8,9\}.
\eq
We then obtain a set of (linear) differential equations for the sixteen functions $R^{(-2)}_{66}, \dots, R^{(-2)}_{99}$, which we call the Picard-Fuchs ideal
for the set of functions $R^{(-2)}_{66}, \dots, R^{(-2)}_{99}$.
These differential equations may involve $\Frobeniusbasis$.
This is unproblematic, as $\Frobeniusbasis$ can be considered a known function at this stage.

The two remaining unknown functions $R^{(-2)}_{FF}$ and $R^{(-2)}_{FE}$
are determined from first-order differential equations.

\subsection{The transformations $R_2^{(-1)}$ and $R_2^{0}$}

The transformation $R_2^{(-1)}$ removes terms of $B$-order $(-1)$, 
the transformation $R_2^{0}$ removes terms of $B$-order $0$.
The ansatz for $R_2^{(-1)}$ reads
{\footnotesize
\bq
\label{Rm1_ldegree}
 \left(
 \begin{array}{rrrr|r|rrrrrrrr|rr}
 1 & 0 & 0 & 0 & 0 & 0 & 0 & 0 & 0 & 0 & 0 & 0 & 0 & 0 & 0 \\
 0 & 1 & 0 & 0 & 0 & 0 & 0 & 0 & 0 & 0 & 0 & 0 & 0 & 0 & 0 \\
 0 & 0 & 1 & 0 & 0 & 0 & 0 & 0 & 0 & 0 & 0 & 0 & 0 & 0 & 0 \\
 0 & 0 & 0 & 1 & 0 & 0 & 0 & 0 & 0 & 0 & 0 & 0 & 0 & 0 & 0 \\
 \hline
 0 & 0 & 0 & 0 & 1 & 0 & 0 & 0 & 0 & 0 & 0 & 0 & 0 & 0 & 0 \\
 \hline
 0 & 0 & 0 & 0 & R^{(-1)}_{65} & 1 & 0 & 0 & 0 & 0 & 0 & 0 & 0 & 0 & 0 \\
 0 & 0 & 0 & 0 & R^{(-1)}_{75} & 0 & 1 & 0 & 0 & 0 & 0 & 0 & 0 & 0 & 0 \\
 0 & 0 & 0 & 0 & R^{(-1)}_{85} & 0 & 0 & 1 & 0 & 0 & 0 & 0 & 0 & 0 & 0 \\
 0 & 0 & 0 & 0 & R^{(-1)}_{95} & 0 & 0 & 0 & 1 & 0 & 0 & 0 & 0 & 0 & 0 \\
 0 & 0 & 0 & 0 & R^{(-1)}_{A5} & 0 & 0 & 0 & 0 & 1 & 0 & 0 & 0 & 0 & 0 \\
 0 & 0 & 0 & 0 & R^{(-1)}_{B5} & 0 & 0 & 0 & 0 & 0 & 1 & 0 & 0 & 0 & 0 \\
 0 & 0 & 0 & 0 & R^{(-1)}_{C5} & 0 & 0 & 0 & 0 & 0 & 0 & 1 & 0 & 0 & 0 \\
 0 & 0 & 0 & 0 & R^{(-1)}_{D5} & 0 & 0 & 0 & 0 & 0 & 0 & 0 & 1 & 0 & 0 \\
 \hline
 0 & 0 & 0 & 0 & 0 & 0 & 0 & 0 & 0 & 0 & 0 & 0 & 0 & 1 & 0 \\
 0 & 0 & 0 & 0 & \frac{1}{\eps} R^{(-1)}_{F5} & R^{(-1)}_{F6} & R^{(-1)}_{F7}  & R^{(-1)}_{F8}  & R^{(-1)}_{F9} & R^{(-1)}_{FA} & R^{(-1)}_{FB} & R^{(-1)}_{FC} & R^{(-1)}_{FD} & 0 & 1 \\
 \end{array}
 \right).
\eq
}
The ansatz for $R_2^{(0)}$ reads
{\footnotesize
\bq
\label{R0_ldegree}
 \left(
 \begin{array}{rrrr|r|rrrrrrrr|rr}
 1 & 0 & 0 & 0 & 0 & 0 & 0 & 0 & 0 & 0 & 0 & 0 & 0 & 0 & 0 \\
 0 & 1 & 0 & 0 & 0 & 0 & 0 & 0 & 0 & 0 & 0 & 0 & 0 & 0 & 0 \\
 0 & 0 & 1 & 0 & 0 & 0 & 0 & 0 & 0 & 0 & 0 & 0 & 0 & 0 & 0 \\
 0 & 0 & 0 & 1 & 0 & 0 & 0 & 0 & 0 & 0 & 0 & 0 & 0 & 0 & 0 \\
 \hline
 0 & 0 & 0 & 0 & 1 & 0 & 0 & 0 & 0 & 0 & 0 & 0 & 0 & 0 & 0 \\
 \hline
 0 & 0 & 0 & 0 & 0 & 1 & 0 & 0 & 0 & 0 & 0 & 0 & 0 & 0 & 0 \\
 0 & 0 & 0 & 0 & 0 & 0 & 1 & 0 & 0 & 0 & 0 & 0 & 0 & 0 & 0 \\
 0 & 0 & 0 & 0 & 0 & 0 & 0 & 1 & 0 & 0 & 0 & 0 & 0 & 0 & 0 \\
 0 & 0 & 0 & 0 & 0 & 0 & 0 & 0 & 1 & 0 & 0 & 0 & 0 & 0 & 0 \\
 0 & 0 & 0 & 0 & 0 & 0 & 0 & 0 & 0 & 1 & 0 & 0 & 0 & 0 & 0 \\
 0 & 0 & 0 & 0 & 0 & 0 & 0 & 0 & 0 & 0 & 1 & 0 & 0 & 0 & 0 \\
 0 & 0 & 0 & 0 & 0 & 0 & 0 & 0 & 0 & 0 & 0 & 1 & 0 & 0 & 0 \\
 0 & 0 & 0 & 0 & 0 & 0 & 0 & 0 & 0 & 0 & 0 & 0 & 1 & 0 & 0 \\
 \hline
 0 & 0 & 0 & 0 & R^{(0)}_{E5} & 0 & 0 & 0 & 0 & 0 & 0 & 0 & 0 & 1 & 0 \\
 0 & 0 & 0 & 0 & R^{(0)}_{F5} & 0 & 0 & 0 & 0 & 0 & 0 & 0 & 0 & 0 & 1 \\
 \end{array}
 \right).
\eq
}
The unknown functions of the ansatz are determined from first-order differential equations.
From a structural point of view the differential equations for these functions are rather simple: 
We may arrange the order in which we solve them such that at every stage the derivatives of the next unknown function 
are given by expressions involving only known functions at this stage.
Although structurally simple, the explicit expressions are rather lengthy.

% -----------------------------------------------------------------------------

\section{The neighbourhood of the point of maximal unipotent monodromy}
\label{sect:mum_point}

We now study the unequal-mass three-loop banana integral in a neighbourhood of the point of maximal unipotent monodromy
$(\kinvar_1,\kinvar_2,\kinvar_3,\kinvar_4)=(0,0,0,0)$.
In a neighbourhood of this point we can give explicit expressions for $\Frobeniusbasis$
and the functions $R^{(-2)}_{66}$, $\dots$, $R^{(-2)}_{99}$.

Let $\bm{n}=(n_1,n_2,n_3,n_4)$ be a multi-index of non-negative integers.
We denote
\bq
 \bm{\kinvar}^{\bm{n}}
 \; = \;
 \kinvar_1^{n_1} \kinvar_2^{n_2} \kinvar_3^{n_3} \kinvar_4^{n_4},
 & &
 \left| \bm{n} \right| \; = \; n_1+n_2+n_3+n_4.
\eq
We set
\bq
\label{def_Frobenius}
 \Frobeniusbasis_{0}
 & = &
 \sum\limits_{n=0}^\infty 
 \sum\limits_{|\bm{n}|=n}
 a_{\bm{n}} \bm{\kinvar}^{\bm{n}},
 \nonumber \\
 \Frobeniusbasis_{1,j}
 & = &
 \frac{1}{\left(2\pi i\right)}
 \sum\limits_{n=0}^\infty 
 \sum\limits_{|\bm{n}|=n}
 \left[
 a_{\bm{n},j} 
 +
 a_{\bm{n}} 
 \ln \kinvar_j
 \right]
 \bm{\kinvar}^{\bm{n}},
 \;\;\;\;\;\;\;\;\;
 1 \; \le \; j \; \le \; 4.
\eq
The coefficients are given by
\bq
\label{Frobenius_coeffs}
 a_{\bm{n}}
 & = &
 \left(-1\right)^{\left|\bm{n}\right|}
 \left( \frac{\left|\bm{n}\right|!}{n_1! n_2! n_3! n_4!} \right)^2,
 \nonumber \\
 a_{\bm{n},j}
 & = &
 \left(-1\right)^{\left|\bm{n}\right|}
 \left( \frac{\left|\bm{n}\right|!}{n_1! n_2! n_3! n_4!} \right)^2
 2 \left[ S_1\left(\left|\bm{n}\right|\right) - S_1\left(n_j\right) \right],
\eq
where $S_m(n)$ denotes the harmonic sum
\bq
 S_m\left(n\right) & = &  \sum\limits_{j=1}^n \frac{1}{j^m}.
\eq
We define the moduli $\tau_1, \tau_2, \tau_3, \tau_4$ by
\bq
 \tau_j & = & 
 \frac{\Frobeniusbasis_{1,j}}{\Frobeniusbasis_{0}}
 \;\;\;\;\;\;\;\;\;
 1 \; \le \; j \; \le \; 4.
\eq
We can furthermore introduce the natural coordinates $\qbar_1, \qbar_2, \qbar_3, \qbar_4$, defined by
\bq
 \qbar_j & = & 
 \exp(2 \pi i \tau_j).
\eq
Expressing the variables $\kinvar_i$ in $\qbar_j$ one obtains for the first few orders~\cite{Candelas:2021lkc}
\begin{equation}
\begin{aligned}
 \kinvar_i(\qbar_j)
 & = \qbar_i\Big( 1
                 + 2(e_1-\qbar_i)
                 +(e_1^2+2e_2-2e_1 \qbar_i+\qbar_i^2) \\
 &\hphantom{{}={}}\hspace{1.5em}
                 +(2e_1e_2+14 e_3-(16e_2-2e_1^2)\qbar_i+10e_1\qbar_i^2-12\qbar_i^3) \\
 &\hphantom{{}={}}\hspace{1.5em}
                 +( e_2^2 + 26 e_1 e_3 - 38 e_4 + ( 2 e_1^3 - 22 e_2 e_1 - 6 e_3 ) \qbar_i + 18 e_1^2 \qbar_i^2 - 20 e_1 \qbar_i^3 )
 \Big) 
 +\mathcal{O}(\mathbf{\qbar}^6). 
\end{aligned}	
\end{equation}
where $e_1=\sum_{i=1}^{4}\qbar_i$, $e_2=\sum_{1\le i<j\le4}\qbar_i \qbar_j$, $e_3=\sum_{1\le i<j<k \le4}\qbar_i \qbar_j \qbar_k$, $e_4 = \qbar_1 \qbar_2 \qbar_3 \qbar_4$ are the elementary symmetric polynomials in the $\qbar_i$.
Note that we have the relation
\bq
 e_4 - e_3 \qbar_i + e_2 \qbar_i^2 - e_1 \qbar_i^3 + \qbar_i^4
 & = & 0,
 \;\;\;\;\;\;
 i \; \in \; \{1,2,3,4\},
\eq
which we may use to eliminate $\qbar_i^4$.
\begin{proposition}
The holomorphic period $\Frobeniusbasis_{0}$ and the single-logarithmic periods $\Frobeniusbasis_{1,j}$
as defined by eq.~(\ref{def_Frobenius}) satisfy the system of
differential equations from section~\ref{sect:step_2} for $\Frobeniusbasis$.
\end{proposition}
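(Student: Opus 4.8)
The plan is to make the abstract ``system of differential equations for $\Frobeniusbasis$'' concrete, reduce the claim to explicit recursions on the Frobenius coefficients of eq.~(\ref{Frobenius_coeffs}), and then verify those recursions with elementary factorial and harmonic-sum identities. First I would extract the generators of the Picard--Fuchs ideal from section~\ref{sect:step_2}. Pairwise subtraction of the four relations in eq.~(\ref{eq_ddpsi0}) eliminates $R^{(-2)}_{F5}$ and produces the second-order operators $L_{ij}=\frac{\partial}{\partial\kinvar_i}\left(\kinvar_i\frac{\partial}{\partial\kinvar_i}\right)-\frac{\partial}{\partial\kinvar_j}\left(\kinvar_j\frac{\partial}{\partial\kinvar_j}\right)$, while the ``more lengthy'' mixed-derivative relations supply the remaining second-order generators. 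The first-order equations for $R^{(-2)}_{F5}$ coming from entry $(15,5)$, combined with the defining relation (\ref{def_Rm2_F5}), yield third-order operators; I would argue these are automatically satisfied once the second-order relations hold, since (\ref{def_Rm2_F5}) fixes $R^{(-2)}_{F5}$ as a second-order expression in $\Frobeniusbasis$ whose first derivatives are then obtained by differentiating the second-order relations. Thus it suffices to show that $\Frobeniusbasis_0$ and $\Frobeniusbasis_{1,j}$ are annihilated by the second-order generators.

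Since every generator is linear with monomial coefficients in the operators $\kinvar_i\frac{\partial}{\partial\kinvar_i}$, acting on $\bm{\kinvar}^{\bm{n}}$ merely shifts the multi-index, so annihilation of the series is equivalent to a recursion among its coefficients. Concretely, $L_{ij}$ sends $\bm{\kinvar}^{\bm{n}}$ to $n_i^2\,\bm{\kinvar}^{\bm{n}-\bm{e}_i}-n_j^2\,\bm{\kinvar}^{\bm{n}-\bm{e}_j}$, where $\bm{e}_i$ is the $i$-th unit multi-index, so the coefficient of $\bm{\kinvar}^{\bm{m}}$ in $L_{ij}\Frobeniusbasis_0$ vanishes if and only if $a_{\bm{m}+\bm{e}_i}(m_i+1)^2=a_{\bm{m}+\bm{e}_j}(m_j+1)^2$. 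Using $(m_i+1)^2/((m_i+1)!)^2=1/(m_i!)^2$ one finds $a_{\bm{m}+\bm{e}_i}(m_i+1)^2=(-1)^{|\bm{m}|+1}\left(\frac{(|\bm{m}|+1)!}{\prod_k m_k!}\right)^2$, which is manifestly symmetric in $i$. This settles the holomorphic period.

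For the single-logarithmic periods I would split $\Frobeniusbasis_{1,k}$ into its pure power-series part $\sum_{\bm{n}} a_{\bm{n},k}\,\bm{\kinvar}^{\bm{n}}$ and the term $\Frobeniusbasis_0\ln\kinvar_k$. Since $\frac{\partial}{\partial\kinvar_l}\ln\kinvar_k$ vanishes unless $l=k$, applying $L_{ij}$ to $\Frobeniusbasis_0\ln\kinvar_k$ produces, using $L_{ij}\Frobeniusbasis_0=0$, the ``log-anomaly'' $+2\frac{\partial}{\partial\kinvar_i}\Frobeniusbasis_0$ when $k=i$, $-2\frac{\partial}{\partial\kinvar_j}\Frobeniusbasis_0$ when $k=j$, and zero otherwise. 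When $k\notin\{i,j\}$ the resulting equation is homogeneous and holds because the harmonic-sum factor $S_1(|\bm{n}|)-S_1(n_k)$ in $a_{\bm{n},k}$ depends only on $|\bm{n}|$ and $n_k$, both unchanged under $\bm{n}\mapsto\bm{n}+\bm{e}_i$ or $\bm{n}+\bm{e}_j$. When $k\in\{i,j\}$ the required inhomogeneity is reproduced exactly by the telescoping identity $S_1(m+1)-S_1(m)=1/(m+1)$ applied to the $S_1(n_k)$ term: the coefficient of $\bm{\kinvar}^{\bm{m}}$ in the power-series part of $L_{ij}\Frobeniusbasis_{1,k}$ equals $\mp 2(-1)^{|\bm{m}|+1}\left(\frac{(|\bm{m}|+1)!}{\prod_l m_l!}\right)^2/(m_k+1)$, which cancels the anomaly coming from $\Frobeniusbasis_0\ln\kinvar_k$.

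The main obstacle is the bookkeeping: I must confirm that the list of second-order generators extracted from the partly unstated mixed-derivative relations is complete, and that the third-order operators arising from $R^{(-2)}_{F5}$ add no independent constraint. The cleanest route around this, which I would take, is to verify directly that evaluating (\ref{def_Rm2_F5}) on $\Frobeniusbasis_0$ (and on each $\Frobeniusbasis_{1,j}$) yields a well-defined function whose first derivatives satisfy the $(15,5)$ equations purely as a consequence of the second-order recursions established above, so that the entire ideal collapses onto the squared-multinomial identity and the harmonic-sum identity. Equivalently, one may recognise $\Frobeniusbasis_0$ as the holomorphic Hulek--Verrill period and invoke that the construction of section~\ref{sect:step_2} was engineered so that $\Frobeniusbasis$ is annihilated precisely by the Picard--Fuchs operators of the K3 family; the explicit recursions then constitute an independent check that the series (\ref{def_Frobenius}) are exactly the corresponding Frobenius solutions at the point of maximal unipotent monodromy.
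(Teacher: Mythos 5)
Your proposal is correct and follows essentially the same route as the paper, whose proof is simply the direct verification obtained ``by inserting the series expansions into the differential equations'': your pairwise subtraction of eq.~(\ref{eq_ddpsi0}) to get $L_{ij}=\partial_i\kinvar_i\partial_i-\partial_j\kinvar_j\partial_j$, the squared-multinomial identity $a_{\bm{m}+\bm{e}_i}(m_i+1)^2=(-1)^{|\bm{m}|+1}\bigl((|\bm{m}|+1)!/\prod_k m_k!\bigr)^2$, and the cancellation of the $\pm 2\,\partial\Frobeniusbasis_0$ log-anomaly against the telescoped harmonic sum are exactly the computations that verification amounts to. The paper does not isolate second-order generators or argue redundancy of the third-order relations; it simply checks all equations on the series, which is also the fallback you correctly adopt for the unstated mixed-derivative and $(15,5)$ relations, so your write-up is a more explicit rendering of the same argument rather than a different one.
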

\begin{proof}
This is easily verified by inserting the series expansions into the differential equations.
\end{proof}
In the following we will take
\bq
 \Frobeniusbasis & = & \Frobeniusbasis_{0},
\eq
i.e. we normalise the first master integral in the sector of interest by the holomorphic period.
\begin{proposition}
Setting
\bq
 R^{(-2)}_{(5+i)(5+j)} & = & 2 \pi i \; \Frobeniusbasis_{0} \; \kinvar_i \frac{\partial \tau_j}{\partial \kinvar_i}
\eq
satisfies the differential equations from section~\ref{sect:step_2} for $R^{(-2)}_{66}, \dots, R^{(-2)}_{99}$.
\end{proposition}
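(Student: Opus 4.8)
The plan is to verify the ansatz directly against the first-order system for $R^{(-2)}_{66},\dots,R^{(-2)}_{99}$ derived in section~\ref{sect:step_2}, using as the only input the preceding proposition, that is, that $\Frobeniusbasis_0$ and all four single-logarithmic periods $\Frobeniusbasis_{1,j}$ solve the Picard--Fuchs ideal for $\Frobeniusbasis$. First I would record the target equations: the $\eps^0$-parts of the entries $(6,6),\dots,(9,9)$ express the derivatives $\partial_{\kinvar_k}R^{(-2)}_{(5+i)(5+j)}$ linearly in the sixteen functions $R^{(-2)}_{(5+i)(5+j)}$ together with the four auxiliary functions $R^{(-2)}_{Fj}$, the latter being eliminated through eq.~(\ref{def_Rm2_F6_to_F9}). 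Because eq.~(\ref{def_Rm2_F6_to_F9}) and---as is borne out by the block structure of eq.~(\ref{A_ldegree})---the derivative relations couple only functions carrying the same column index $j$, the system organizes into four identical copies, one for each $j\in\{1,2,3,4\}$; it therefore suffices to treat a single column.

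The key step is to rewrite the ansatz as a Wronskian. With $\tau_j=\Frobeniusbasis_{1,j}/\Frobeniusbasis_0$ one has
\bq
 R^{(-2)}_{(5+i)(5+j)}
 & = &
 2\pi i\,\Frobeniusbasis_0\,\kinvar_i\frac{\partial\tau_j}{\partial\kinvar_i}
 \; = \;
 \frac{2\pi i\,\kinvar_i}{\Frobeniusbasis_0}
 \left(\Frobeniusbasis_0\,\frac{\partial\Frobeniusbasis_{1,j}}{\partial\kinvar_i}
     - \Frobeniusbasis_{1,j}\,\frac{\partial\Frobeniusbasis_0}{\partial\kinvar_i}\right),
\eq
so that, up to the factor $2\pi i\,\kinvar_i/\Frobeniusbasis_0$, each entry is the directional Wronskian $W_i(\Frobeniusbasis_0,\Frobeniusbasis_{1,j})=\Frobeniusbasis_0\,\partial_{\kinvar_i}\Frobeniusbasis_{1,j}-\Frobeniusbasis_{1,j}\,\partial_{\kinvar_i}\Frobeniusbasis_0$ of two solutions of the \emph{same} Picard--Fuchs ideal. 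The overall constant $2\pi i$ is immaterial, since the column equations are linear and homogeneous. I would then differentiate this expression, substitute eq.~(\ref{def_Rm2_F6_to_F9}) for $R^{(-2)}_{Fj}$, and insert everything into the target equations.

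The heart of the argument is that all second derivatives cancel. Differentiating the displayed expression produces the combination $\Frobeniusbasis_0\,\partial_{\kinvar_k}\partial_{\kinvar_i}\Frobeniusbasis_{1,j}-\Frobeniusbasis_{1,j}\,\partial_{\kinvar_k}\partial_{\kinvar_i}\Frobeniusbasis_0$. By the preceding proposition every second derivative of $\Frobeniusbasis_0$ and of $\Frobeniusbasis_{1,j}$ is reduced by one and the same first-order operator $L_{ki}=\sum_m c^{(ki)}_m(\kinvar)\,\partial_{\kinvar_m}+c^{(ki)}_0(\kinvar)$---this is precisely eq.~(\ref{eq_ddpsi0}), its mixed-derivative analogues, and eq.~(\ref{def_Rm2_F5})---whose coefficients depend on $\kinvar$ only, not on which period is inserted. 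Hence the zeroth-order parts proportional to $c^{(ki)}_0$ cancel, and the combination collapses to $\sum_m c^{(ki)}_m\,W_m(\Frobeniusbasis_0,\Frobeniusbasis_{1,j})$, i.e.\ back into a $\kinvar$-linear combination of the functions $R^{(-2)}_{(5+m)(5+j)}$ themselves. This is the multivariate analogue of Abel's identity: the Wronskian of solutions of a second-order system satisfies a first-order equation. Carrying out the reduction turns the differentiated ansatz into exactly the right-hand side prescribed by the target equations, which proves the claim.

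I expect the main obstacle to be computational bookkeeping rather than anything conceptual: the mixed-derivative reductions generalizing eq.~(\ref{eq_ddpsi0}) are lengthy, and the elimination of $R^{(-2)}_{Fj}$ introduces factors $1/\Frobeniusbasis_0$ and $(\partial_{\kinvar_i}\Frobeniusbasis_0)/\Frobeniusbasis_0$ that must be tracked carefully so that the $c^{(ki)}_0$-cancellation becomes manifest and no residual second-derivative or $\Frobeniusbasis_{1,j}$-linear term survives. As an independent and entirely mechanical check one may, exactly as in the preceding proposition, insert the explicit series of eqs.~(\ref{def_Frobenius})--(\ref{Frobenius_coeffs}) for $\Frobeniusbasis_0$ and $\Frobeniusbasis_{1,j}$ together with the mirror-map expansion for $\tau_j$ and verify the equations order by order in $\bm{\kinvar}$.
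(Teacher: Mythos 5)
Your Wronskian strategy points in a sensible direction, but its central reduction claim is false as stated, and this is exactly where the argument breaks. You assert that, modulo the Picard--Fuchs ideal, every second derivative of $\Frobeniusbasis_0$ and $\Frobeniusbasis_{1,j}$ ``is reduced by one and the same first-order operator $L_{ki}$'' with coefficients depending on $\kinvar$ only, so that the differentiated bilinear collapses onto the four first Wronskians $W_m(\Frobeniusbasis_0,\Frobeniusbasis_{1,j})$ alone. That is impossible here: the local system governing $\Frobeniusbasis$ has rank six ($\hgeom^{2,0}+\hgeom^{1,1}+\hgeom^{0,2}=1+4+1$, matching the six masters $J_5,\dots,J_9,J_{15}$), whereas a reduction of all second derivatives to the span of $\{\psi,\partial_{\kinvar_1}\psi,\dots,\partial_{\kinvar_4}\psi\}$ would force the holonomic rank to be at most five. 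Concretely, eq.~(\ref{eq_ddpsi0}) is not a first-order reduction: its right-hand side contains $R^{(-2)}_{F5}$, which for a general solution $\psi$ of the ideal must be replaced by the solution-dependent combination $X[\psi]=\tfrac{1}{16}\left(\sum_{i=1}^4 \kinvar_i\,\partial_{\kinvar_i}\right)^{2}\psi$, cf.~eq.~(\ref{def_Rm2_F5}); invoking eqs.~(\ref{eq_ddpsi0}) and (\ref{def_Rm2_F5}) jointly as a $\kinvar$-rational first-order operator is circular. Consequently, after the (correct) cancellation of the zeroth-order terms, the derivative of your bilinear does \emph{not} close on $\sum_m c^{(ki)}_m W_m$: a term proportional to the second bilinear $\Frobeniusbasis_0\,X[\Frobeniusbasis_{1,j}]-\Frobeniusbasis_{1,j}\,X[\Frobeniusbasis_0]$ survives. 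That surviving object is precisely (up to normalisation) the auxiliary function $R^{(-2)}_{Fj}$. A correct structural proof must therefore (i) show that eq.~(\ref{def_Rm2_F6_to_F9}) evaluated on the ansatz reproduces this second Wronskian, and (ii) verify that its derivative, governed by the $\eps^{-1}$ terms of the entries $(F,6),\dots,(F,9)$, also closes on the span of the five bilinears $\{W_1,\dots,W_4,\,\Frobeniusbasis_0 X[\cdot]-(\cdot)X[\Frobeniusbasis_0]\}$. Neither step appears in your write-up, and your explicit claim that ``no residual second-derivative \dots term survives'' is exactly the point that fails.

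For comparison, the paper's own proof is the purely mechanical route you relegate to a closing remark: insert the series expansions of eqs.~(\ref{def_Frobenius})--(\ref{Frobenius_coeffs}) into the differential equations for $R^{(-2)}_{66},\dots,R^{(-2)}_{99}$ and verify them directly. Had you led with that, your proof would coincide with the paper's and be complete; as written, the main body of your argument needs the repair above (tracking the rank-six structure through the extra bilinear that becomes $R^{(-2)}_{Fj}$) before it constitutes a valid alternative.
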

\begin{proof}
This is again easily verified by inserting the series expansions into the differential equations.
\end{proof}
The functions
\bq
 R^{(-2)}_{65}, R^{(-2)}_{75}, R^{(-2)}_{85}, R^{(-2)}_{95},
 R^{(-2)}_{F5},
 R^{(-2)}_{F6}, R^{(-2)}_{F7}, R^{(-2)}_{F8}, R^{(-2)}_{F9}
\eq
are then given by eqs.~(\ref{def_Rm2_65_to_75},\ref{def_Rm2_F5},\ref{def_Rm2_F6_to_F9}).
All remaining functions are determined by first-order linear differential equations.

We verified that all differential one-forms appearing in the matrix $\tilde{A}$ have at the boundary point at most a simple pole.

% -----------------------------------------------------------------------------

\section{Results}
\label{sect:results}

In the basis $K$ the differential equation is in $\eps$-factorised form:
\bq
\label{diff_eq_K}
 d K\left(\eps,\kinvar\right) & = & \eps \tilde{A}(\kinvar) K\left(\eps,\kinvar\right),
\eq
where the non-zero entries of the matrix $\tilde{A}(\kinvar)$ are
{\scriptsize
\bq
\lefteqn{
 \tilde{A}\left(\kinvar\right)
 = } & &
 \nonumber \\
 & &
 \left( \begin{array}{ccccccccccccccc}
 \omega_{1,1} & 0 & 0 & 0 & 0 & 0 & 0 & 0 & 0 & 0 & 0 & 0 & 0 & 0 & 0 \\
 0 & \omega_{2,2} & 0 & 0 & 0 & 0 & 0 & 0 & 0 & 0 & 0 & 0 & 0 & 0 & 0 \\
 0 & 0 & \omega_{3,3} & 0 & 0 & 0 & 0 & 0 & 0 & 0 & 0 & 0 & 0 & 0 & 0 \\
 0 & 0 & 0 & \omega_{4,4} & 0 & 0 & 0 & 0 & 0 & 0 & 0 & 0 & 0 & 0 & 0 \\
 0 & 0 & 0 & 0 & \omega_{5,5} & \omega_{5,6} & \omega_{5,7} & \omega_{5,8} & \omega_{5,9} & 0 & 0 & 0 & 0 & 0 & 0 \\
 0 & 0 & 0 & 0 & \omega_{6,5} & \omega_{6,6} & \omega_{6,7} & \omega_{6,8} & \omega_{6,9} & \omega_{6,10} & \omega_{6,11} & \omega_{6,12} & \omega_{6,13} & \omega_{6,14} & \omega_{6,15} \\
 0 & 0 & 0 & 0 & \omega_{7,5} & \omega_{7,6} & \omega_{7,7} & \omega_{7,8} & \omega_{7,9} & \omega_{7,10} & \omega_{7,11} & \omega_{7,12} & \omega_{7,13} & \omega_{7,14} & \omega_{7,15} \\
 0 & 0 & 0 & 0 & \omega_{8,5} & \omega_{8,6} & \omega_{8,7} & \omega_{8,8} & \omega_{8,9} & \omega_{8,10} & \omega_{8,11} & \omega_{8,12} & \omega_{8,13} & \omega_{8,14} & \omega_{8,15} \\
 0 & 0 & 0 & 0 & \omega_{9,5} & \omega_{9,6} & \omega_{9,7} & \omega_{9,8} & \omega_{9,9} & \omega_{9,10} & \omega_{9,11} & \omega_{9,12} & \omega_{9,13} & \omega_{9,14} & \omega_{9,15} \\
 \omega_{10,1} & \omega_{10,2} & \omega_{10,3} & \omega_{10,4} & \omega_{10,5} & \omega_{10,6} & \omega_{10,7} & \omega_{10,8} & \omega_{10,9} & \omega_{10,10} & \omega_{10,11} & \omega_{10,12} & \omega_{10,13} & 0 & 0 \\
 \omega_{11,1} & \omega_{11,2} & \omega_{11,3} & \omega_{11,4} & \omega_{11,5} & \omega_{11,6} & \omega_{11,7} & \omega_{11,8} & \omega_{11,9} & \omega_{11,10} & \omega_{11,11} & \omega_{11,12} & \omega_{11,13} & 0 & 0 \\
 \omega_{12,1} & \omega_{12,2} & \omega_{12,3} & \omega_{12,4} & \omega_{12,5} & \omega_{12,6} & \omega_{12,7} & \omega_{12,8} & \omega_{12,9} & 0 & \omega_{12,11} & \omega_{12,12} & 0 & \omega_{12,14} & 0 \\
 \omega_{13,1} & \omega_{13,2} & \omega_{13,3} & \omega_{13,4} & \omega_{13,5} & \omega_{13,6} & \omega_{13,7} & \omega_{13,8} & \omega_{13,9} & \omega_{13,10} & \omega_{13,11} & 0 & \omega_{13,13} & \omega_{13,14} & 0 \\
 \omega_{14,1} & 0 & \omega_{14,3} & \omega_{14,4} & \omega_{14,5} & \omega_{14,6} & \omega_{14,7} & \omega_{14,8} & \omega_{14,9} & 0 & 0 & \omega_{14,12} & \omega_{14,13} & \omega_{14,14} & 0 \\
 \omega_{15,1} & \omega_{15,2} & \omega_{15,3} & \omega_{15,4} & \omega_{15,5} & \omega_{15,6} & \omega_{15,7} & \omega_{15,8} & \omega_{15,9} & \omega_{15,10} & \omega_{15,11} & \omega_{15,12} & \omega_{15,13} & \omega_{15,14} & \omega_{15,15} \\
 \end{array} \right).
 \nonumber \\
\eq
}
We will solve the differential equation starting from the boundary point
\bq
 \kinvar_1 \; = \; \kinvar_2 \; = \; \kinvar_3 \; = \; \kinvar_4 \; = \; 0.
\eq
Let $\gamma : [0,1] \rightarrow {\mathbb C}^4$ be a path from the boundary point $(0,0,0,0)$ to $(\kinvar_1,\kinvar_2,\kinvar_3,\kinvar_4)$.
We write
\bq
 f_{i,j}\left(\lambda\right) d\lambda & = & \gamma^\ast \omega_{i,j}
\eq
for the pull-back of $\omega_{i,j}$ by $\gamma$ to the interval $[0,1]$.
We define the $k$-fold iterated integral by
\bq
 \lefteqn{
 I_{\gamma}\left(\omega_{i_1,j_1},\omega_{i_2,j_2},...,\omega_{i_k,j_k};\lambda\right)
 = } & &
 \nonumber \\
 & &
 \lim\limits_{\lambda_0\rightarrow 0}
 R \left[
 \int\limits_{\lambda_0}^{\lambda} d\lambda_1 f_{i_1,j_1}\left(\lambda_1\right)
 \int\limits_{\lambda_0}^{\lambda_1} d\lambda_2 f_{i_2,j_2}\left(\lambda_2\right)
 \dots
 \int\limits_{\lambda_0}^{\lambda_{k-1}} d\lambda_k f_{i_k,j_k}\left(\lambda_r\right)
 \right].
\eq
If $f_{i_k,j_k}(\lambda)$ has a simple pole at $\lambda=0$ 
we employ the standard ``trailing zero'' or ``tangential base point'' regularisation \cite{Brown:2014pnb,Walden:2020odh}:
We first take $\lambda_0$ to have a small non-zero value.
The integration will produce terms with $\ln(\lambda_0)$.
Let $R$ be the operator, which removes all $\ln(\lambda_0)$-terms.
After these terms have been removed, we may take the limit $\lambda_0\rightarrow 0$.
With these preparations we may write the solution to eq.~(\ref{diff_eq_K}) as
\bq
 K\left(\eps,\kinvar\right)
 & = & 
 {\mathcal P}
 \exp\left(\eps \int\limits_\gamma \tilde{A}(\kinvar) \right) C\left(\eps\right).
\eq
where ${\mathcal P}$ is the path ordering operator and $C(\eps)$ the vector of boundary values. 
The boundary values for the tadpoles are
\bq
 C_i\left(\eps\right) & = & e^{3\eps\gamma_E} \; \Gamma\left(1+\eps\right)^3,
 \;\;\;\;\;\;
 i \; \in \; \{1,2,3,4\}.
\eq
By construction, the integrals $J_{10},\dots,J_{14}$ have trivial boundary values. 
As we may choose the functions $R^{(-1)}_{A5}$-$R^{(-1)}_{D5}$ to vanish at our boundary point, this carries over to
the integrals $K_{10},\dots,K_{14}$ and we have
\bq
 C_i\left(\eps\right) & = & 0,
 \;\;\;\;\;\;
 i \; \in \; \{10,11,12,13,14\}.
\eq
More generally, we may choose the boundary constants for the functions from the ansatz such that the matrix $R_2$ reduces to the identity matrix
at the boundary point $(\kinvar_1,\kinvar_2,\kinvar_3,\kinvar_4)=(0,0,0,0)$.
For the integral $J_5$ (and $K_5$) we give the boundary constants and all logarithmic terms:
\bq
\lefteqn{
 \left. J_5 \right|_{\kinvar_1,\kinvar_2,\kinvar_3,\kinvar_4\rightarrow 0}
 \; = \;
 \left. K_5 \right|_{\kinvar_1,\kinvar_2,\kinvar_3,\kinvar_4\rightarrow 0}
 = 
 } \nonumber \\
 & &
 e^{3\eps\gamma_E} \Gamma\left(1+\eps\right)^3 \left[
  \sum\limits_{i<j<k} \kinvar_i^{-\eps} \kinvar_j^{-\eps} \kinvar_k^{-\eps}
  -2 \frac{\Gamma\left(1-\eps\right)^2}{\Gamma\left(1-2\eps\right)} \sum\limits_{i<j} \kinvar_i^{-\eps} \kinvar_j^{-\eps} 
  +3 \frac{\Gamma\left(1-\eps\right)^3\Gamma\left(1+2\eps\right)}{\Gamma\left(1+\eps\right)^2\Gamma\left(1-3\eps\right)} \sum\limits_{i} \kinvar_i^{-\eps} 
 \right. \nonumber \\
 & & \left.
  -4\frac{\Gamma\left(1-\eps\right)^4\Gamma\left(1+3\eps\right)}{\Gamma\left(1+\eps\right)^3\Gamma\left(1-4\eps\right)}
 \right].
\eq
The boundary values for the remaining master integrals $K_6,\dots,K_9$ and $K_{15}$
follow from the higher orders in $\eps$ of this expression.

For the master integral $K_5$ we find
\bq
 K_5
 & = &
 \left[ 16 \zeta_3
 + \sum\limits_{i=6}^{9} \sum\limits_{j=10}^{15} \sum\limits_{k=1}^{4} I_\gamma\left(\omega_{5,i},\omega_{i,j},\omega_{j,k}\right) \right] \eps^3
 + {\mathcal O}\left(\eps^4\right).
\eq
For numerical results we consider the kinematic point
\bq
 \kinvar_1 \; = \; \frac{1}{29},
 \;\;\;\;\;\;
 \kinvar_2 \; = \; \frac{1}{31},
 \;\;\;\;\;\;
 \kinvar_3 \; = \; \frac{1}{37},
 \;\;\;\;\;\;
 \kinvar_4 \; = \; \frac{1}{41},
\eq
which lies within the radius of convergence of the series expansion of $\psi_0$ near the MUM-point.
We obtain at this kinematic point for the integral $K_5$ the value
\bq
 K_5
 & = &
 218.005564 \eps^3  + 983.551161 \eps^4
 +\mathcal{O}\left(\eps^5\right).
\eq
These values have been obtained by expanding all iterated integrals as series up to order $20$ in the integration variable.
We verified numerically against \texttt{AMFlow}~\cite{Liu:2022chg} and sector decomposition programs~\cite{Bogner:2007cr,Heinrich:2023til}
and found perfect agreement for the nine digits given above.

Finally, we consider the phenomenologically interesting case, where the four unequal masses are given by the four heavy particles of the standard
model:
\bq
 & & m_1 \; = \; m_W \; = \; 80.37 \; \mbox{GeV},
 \;\;\; 
 m_2 \; = \; m_Z \; = \; 91.19 \; \mbox{GeV},
 \nonumber \\
 & & m_3 \; = \; m_H \; = \; 125.20 \; \mbox{GeV},
 \;\;\; 
 m_4 \; = \; m_t \; = \; 172.56 \; \mbox{GeV}.
\eq
In fig.~\ref{fig:plot_K_5} we plot the real and the imaginary part of the $\eps^3$-term and the $\eps^4$-term of the master integral $K_5$
as a function of
\bq
 \lambda 
 & = &
 \frac{\left(m_W+m_Z+m_H+m_t\right)^2}{\left(-p^2\right)}.
\eq
\begin{figure}
\begin{center}
\includegraphics[scale=0.62]{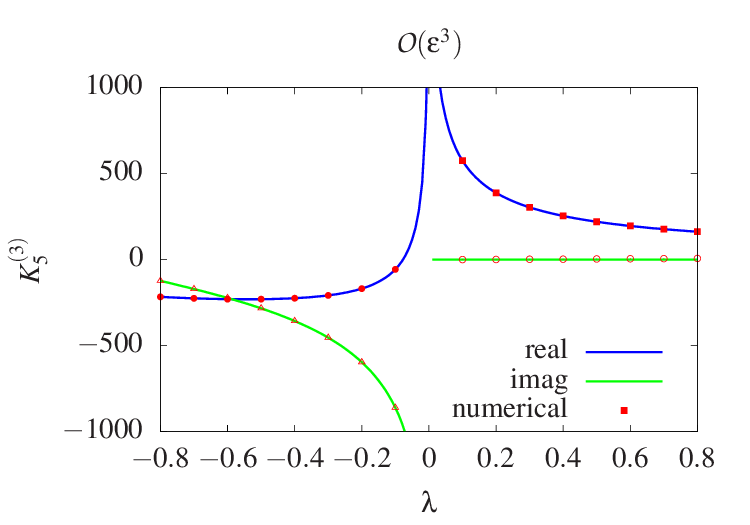}
\includegraphics[scale=0.62]{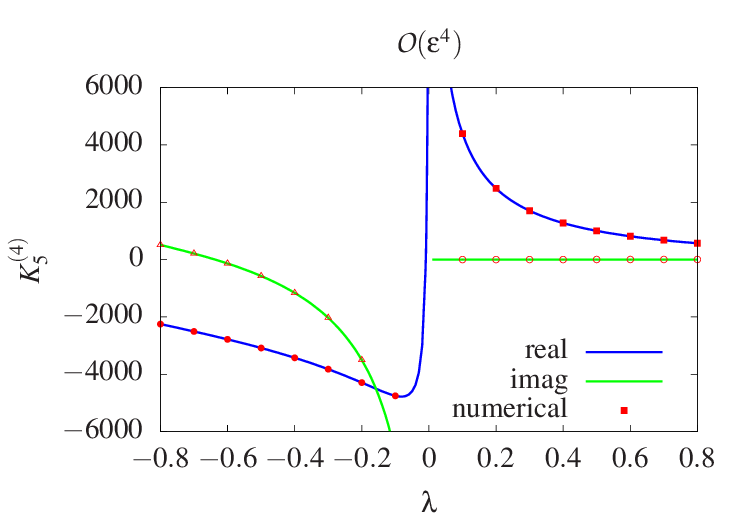}
\end{center}
\caption{
The real and the imaginary part of the master integral $K_5$ at order $\eps^3$ (left plot) and order $\eps^4$ (right plot) as a function of $\lambda$.
}
\label{fig:plot_K_5}
\end{figure}
The plot also shows numerical results from \texttt{AMFlow}~\cite{Liu:2022chg} and sector decomposition programs~\cite{Bogner:2007cr,Heinrich:2023til},
which are in perfect agreement.

% -----------------------------------------------------------------------------

\section{Conclusions}
\label{sect:conclusions}

In this paper we considered the unequal-mass three-loop banana integral.
For several reasons, this is a challenging integral: It depends on four kinematic variables, 
it has a sector with a large number of master integrals (eleven master integrals in the top sector) and it is associated to a K3-geometry.
We showed how this integral can be computed systematically with the general algorithm of ref.~\cite{Bree:2025maw}.
The calculation in this paper illustrates how the algorithm works for a highly non-trivial example.
It would be interesting to extend the present analysis to banana integrals with distinct masses at any loop order.

\subsection*{Acknowledgements}

This work has been supported by the Research Unit ``Modern Foundations of Scattering Amplitudes'' (FOR 5582)
funded by the German Research Foundation (DFG).
X.W. is supported by the University Development Fund of The Chinese University of Hong Kong, Shenzhen, under the Grant No. UDF01003912.

% -----------------------------------------------------------------------------

\begin{appendix}

\section{The integrals $J_{10},\dots,J_{14}$}
\label{appendix:J_10_J_14}

In this appendix we give the explicit expressions for $J_{10},\dots,J_{14}$ in terms of the basis $I$.
The basis $I$ is defined in eq.~(\ref{def_basis_I}).
For $J_{10},\dots,J_{14}$ we write
\bq
 J_i & = & \sum\limits_{j=1}^{15} \frac{c_{i,j}}{3\left(1+\kinvar_1+\kinvar_2+\kinvar_3+\kinvar_4\right)} I_j.
\eq
Only the coefficients with $j \ge 5$ are non-vanishing.
The coefficients $c_{i,j}$ are given by
{\footnotesize
\bq
c_{10,5} & = &  \left( 45\,{\kinvar_{1}}^{2}-102\,\kinvar_{1}\,\kinvar_{2}+204\,\kinvar_{1}\,\kinvar_{3}+204\,\kinvar_{1}\,\kinvar_{4}-27\,{\kinvar_{2}}^{2}-60\,
\kinvar_{2}\,\kinvar_{3}-60\,\kinvar_{2}\,\kinvar_{4}-9\,{\kinvar_{3}}^{2}-186\,\kinvar_{3}\,\kinvar_{4}-9\,{\kinvar_{4}}^{2}
 \right. \nonumber \\
 & & \left.
 +45\,\kinvar_{1}-27\,\kinvar_{2}-9\,\kinvar_{3}-9\,
\kinvar_{4} \right) {\eps}^{3}+ \left( 9\,{\kinvar_{1}}^{2}-68\,\kinvar_{1}\,\kinvar_{2}+106\,\kinvar_{1}\,\kinvar_{3}+106\,\kinvar_{1}\,\kinvar_{4}-7\,{\kinvar_{2}
}^{2}-22\,\kinvar_{2}\,\kinvar_{3}-22\,\kinvar_{2}\,\kinvar_{4}
 \right. \nonumber \\
 & & \left.
 -{\kinvar_{3}}^{2}-100\,\kinvar_{3}\,\kinvar_{4}-{\kinvar_{4}}^{2}+9\,\kinvar_{1}-7\,\kinvar_{2}-\kinvar_{3}-
\kinvar_{4} \right) {\eps}^{2}+ \left( -10\,\kinvar_{1}\,\kinvar_{2}+14\,\kinvar_{1}\,\kinvar_{3}+14\,\kinvar_{1}\,\kinvar_{4}-2\,\kinvar_{2}\,\kinvar_{3}-2\,\kinvar_{2}
\,\kinvar_{4}
 \right. \nonumber \\
 & & \left.
-14\,\kinvar_{3}\,\kinvar_{4} \right) \eps,
 \nonumber \\
c_{10,6} & = & -3\, \left( 3\,{\kinvar_{1}}^{2}-12\,\kinvar_{1}\,\kinvar_{2}+24\,\kinvar_{1}\,\kinvar_{3}+24\,\kinvar_{1}\,\kinvar_{4}+5\,{\kinvar_{2}}^{2}-6\,\kinvar_{2}
\,\kinvar_{3}-6\,\kinvar_{2}\,\kinvar_{4}-7\,{\kinvar_{3}}^{2}-42\,\kinvar_{3}\,\kinvar_{4}-7\,{\kinvar_{4}}^{2}+6\,\kinvar_{1}+8\,\kinvar_{2}
 \right. \nonumber \\
 & & \left.
-4\,\kinvar_{3}-4\,\kinvar_{4}+3
 \right) {\eps}^{2}\kinvar_{1}+ \left( 15\,\kinvar_{1}\,\kinvar_{2}-21\,\kinvar_{1}\,\kinvar_{3}-21\,\kinvar_{1}\,\kinvar_{4}-5\,{\kinvar_{2}}^{2}+6\,\kinvar_{2}\,
\kinvar_{3}+6\,\kinvar_{2}\,\kinvar_{4}+7\,{\kinvar_{3}}^{2}+42\,\kinvar_{3}\,\kinvar_{4}
 \right. \nonumber \\
 & & \left.
+7\,{\kinvar_{4}}^{2}
-5\,\kinvar_{2}+7\,\kinvar_{3}+7\,\kinvar_{4} \right) \eps\,
\kinvar_{1},
 \nonumber \\
c_{10,7} & = & - \left( 31\,{\kinvar_{1}}^{2}-36\,\kinvar_{1}\,\kinvar_{2}+110\,\kinvar_{1}\,\kinvar_{3}+110\,\kinvar_{1}\,\kinvar_{4}-7\,{\kinvar_{2}}^{2}-24\,
\kinvar_{2}\,\kinvar_{3}-24\,\kinvar_{2}\,\kinvar_{4}-5\,{\kinvar_{3}}^{2}-94\,\kinvar_{3}\,\kinvar_{4}-5\,{\kinvar_{4}}^{2}+24\,\kinvar_{1}
 \right. \nonumber \\
 & & \left.
-14\,\kinvar_{2}-12\,\kinvar_{3}-12
\,\kinvar_{4}-7 \right) {\eps}^{2}\kinvar_{2}- \left( 5\,{\kinvar_{1}}^{2}-15\,\kinvar_{1}\,\kinvar_{2}+34\,\kinvar_{1}\,\kinvar_{3}+34\,\kinvar_{1}\,\kinvar_{4}-3
\,\kinvar_{2}\,\kinvar_{3}-3\,\kinvar_{2}\,\kinvar_{4}+{\kinvar_{3}}^{2}
 \right. \nonumber \\
 & & \left.
-26\,\kinvar_{3}\,\kinvar_{4}+{\kinvar_{4}}^{2}+5\,\kinvar_{1}+\kinvar_{3}+\kinvar_{4} \right) \eps\,
\kinvar_{2},
 \nonumber \\
c_{10,8} & = &  \left( 11\,{\kinvar_{1}}^{2}+82\,\kinvar_{1}\,\kinvar_{2}-72\,\kinvar_{1}\,\kinvar_{3}-98\,\kinvar_{1}\,\kinvar_{4}+11\,{\kinvar_{2}}^{2}+24\,\kinvar_{2}
\,\kinvar_{3}-2\,\kinvar_{2}\,\kinvar_{4}+{\kinvar_{3}}^{2}+60\,\kinvar_{3}\,\kinvar_{4}-25\,{\kinvar_{4}}^{2}+12\,\kinvar_{1}
 \right. \nonumber \\
 & & \left.
+12\,\kinvar_{2}+2\,\kinvar_{3}-24\,\kinvar_{4}+1
 \right) {\eps}^{2}\kinvar_{3}+ \left( 7\,{\kinvar_{1}}^{2}+26\,\kinvar_{1}\,\kinvar_{2}-21\,\kinvar_{1}\,\kinvar_{3}-28\,\kinvar_{1}\,\kinvar_{4}-{\kinvar_{2}}^{2}
+3\,\kinvar_{2}\,\kinvar_{3}-4\,\kinvar_{2}\,\kinvar_{4}
 \right. \nonumber \\
 & & \left.
+21\,\kinvar_{3}\,\kinvar_{4}
-7\,{\kinvar_{4}}^{2}+7\,\kinvar_{1}-\kinvar_{2}-7\,\kinvar_{4} \right) \eps\,\kinvar_{3},
 \nonumber \\
c_{10,9} & = &  \left( 11\,{\kinvar_{1}}^{2}+82\,\kinvar_{1}\,\kinvar_{2}-98\,\kinvar_{1}\,\kinvar_{3}-72\,\kinvar_{1}\,\kinvar_{4}+11\,{\kinvar_{2}}^{2}-2\,\kinvar_{2}\,
\kinvar_{3}+24\,\kinvar_{2}\,\kinvar_{4}-25\,{\kinvar_{3}}^{2}+60\,\kinvar_{3}\,\kinvar_{4}+{\kinvar_{4}}^{2}+12\,\kinvar_{1}
 \right. \nonumber \\
 & & \left.
+12\,\kinvar_{2}-24\,\kinvar_{3}+2\,\kinvar_{4}+1
 \right) {\eps}^{2}\kinvar_{4}+ \left( 7\,{\kinvar_{1}}^{2}+26\,\kinvar_{1}\,\kinvar_{2}-28\,\kinvar_{1}\,\kinvar_{3}-21\,\kinvar_{1}\,\kinvar_{4}-{\kinvar_{2}}^{2}
-4\,\kinvar_{2}\,\kinvar_{3}+3\,\kinvar_{2}\,\kinvar_{4}-7\,{\kinvar_{3}}^{2}
 \right. \nonumber \\
 & & \left.
+21\,\kinvar_{3}\,\kinvar_{4}+7\,\kinvar_{1}-\kinvar_{2}-7\,\kinvar_{3} \right) \eps\,\kinvar_{4},
 \nonumber \\
c_{10,10} & = &  \left( 5\,{\kinvar_{1}}^{2}-10\,\kinvar_{1}\,\kinvar_{2}+26\,\kinvar_{1}\,\kinvar_{3}+26\,\kinvar_{1}\,\kinvar_{4}+5\,{\kinvar_{2}}^{2}-6\,\kinvar_{2}\,
\kinvar_{3}-6\,\kinvar_{2}\,\kinvar_{4}-7\,{\kinvar_{3}}^{2}-42\,\kinvar_{3}\,\kinvar_{4}-7\,{\kinvar_{4}}^{2}+10\,\kinvar_{1}+10\,\kinvar_{2}
 \right. \nonumber \\
 & & \left.
-2\,\kinvar_{3}-2\,\kinvar_{4}+5
 \right) \eps\,\kinvar_{1}\,\kinvar_{2},
 \nonumber \\
c_{10,11} & = & - \left( 7\,{\kinvar_{1}}^{2}+22\,\kinvar_{1}\,\kinvar_{2}-14\,\kinvar_{1}\,\kinvar_{3}-14\,\kinvar_{1}\,\kinvar_{4}-5\,{\kinvar_{2}}^{2}+6\,\kinvar_{2}\,
\kinvar_{3}+6\,\kinvar_{2}\,\kinvar_{4}+7\,{\kinvar_{3}}^{2}+42\,\kinvar_{3}\,\kinvar_{4}+7\,{\kinvar_{4}}^{2}+14\,\kinvar_{1}
 \right. \nonumber \\
 & & \left.
+2\,\kinvar_{2}
+14\,\kinvar_{3}+14\,\kinvar_{4}+7
 \right) \eps\,\kinvar_{1}\,\kinvar_{3},
 \nonumber \\
c_{10,12} & = & - \left( 7\,{\kinvar_{1}}^{2}+22\,\kinvar_{1}\,\kinvar_{2}-14\,\kinvar_{1}\,\kinvar_{3}-14\,\kinvar_{1}\,\kinvar_{4}-5\,{\kinvar_{2}}^{2}+6\,\kinvar_{2}\,
\kinvar_{3}+6\,\kinvar_{2}\,\kinvar_{4}+7\,{\kinvar_{3}}^{2}+42\,\kinvar_{3}\,\kinvar_{4}+7\,{\kinvar_{4}}^{2}+14\,\kinvar_{1}
 \right. \nonumber \\
 & & \left.
+2\,\kinvar_{2}
+14\,\kinvar_{3}+14\,\kinvar_{4}+7
 \right) \eps\,\kinvar_{1}\,\kinvar_{4},
 \nonumber \\
c_{10,13} & = & - \left( 3\,{\kinvar_{1}}^{2}+22\,\kinvar_{1}\,\kinvar_{2}-26\,\kinvar_{1}\,\kinvar_{3}-42\,\kinvar_{1}\,\kinvar_{4}-{\kinvar_{2}}^{2}+2\,\kinvar_{2}\,
\kinvar_{3}-14\,\kinvar_{2}\,\kinvar_{4}-{\kinvar_{3}}^{2}+10\,\kinvar_{3}\,\kinvar_{4}-17\,{\kinvar_{4}}^{2}+2\,\kinvar_{1}-2\,\kinvar_{2}
 \right. \nonumber \\
 & & \left.
-2\,\kinvar_{3}-18\,\kinvar_{4}-1
 \right) \eps\,\kinvar_{2}\,\kinvar_{3},
 \nonumber \\
c_{10,14} & = & - \left( 3\,{\kinvar_{1}}^{2}+22\,\kinvar_{1}\,\kinvar_{2}-42\,\kinvar_{1}\,\kinvar_{3}-26\,\kinvar_{1}\,\kinvar_{4}-{\kinvar_{2}}^{2}-14\,\kinvar_{2}\,
\kinvar_{3}+2\,\kinvar_{2}\,\kinvar_{4}-17\,{\kinvar_{3}}^{2}+10\,\kinvar_{3}\,\kinvar_{4}-{\kinvar_{4}}^{2}+2\,\kinvar_{1}-2\,\kinvar_{2}
 \right. \nonumber \\
 & & \left.
-18\,\kinvar_{3}-2\,\kinvar_{4}-1
 \right) \eps\,\kinvar_{2}\,\kinvar_{4},
 \nonumber \\
c_{10,15} & = & - \left( 21\,{\kinvar_{1}}^{2}+30\,\kinvar_{1}\,\kinvar_{2}-14\,\kinvar_{1}\,\kinvar_{3}-14\,\kinvar_{1}\,\kinvar_{4}-11\,{\kinvar_{2}}^{2}-14\,
\kinvar_{2}\,\kinvar_{3}-14\,\kinvar_{2}\,\kinvar_{4}-7\,{\kinvar_{3}}^{2}+14\,\kinvar_{3}\,\kinvar_{4}-7\,{\kinvar_{4}}^{2}+14\,\kinvar_{1}
 \right. \nonumber \\
 & & \left.
-18\,\kinvar_{2}-14\,\kinvar_{3}-14
\,\kinvar_{4}-7 \right) \eps\,\kinvar_{3}\,\kinvar_{4},
 \nonumber \\
c_{11,5} & = &  \left( -96\,\kinvar_{1}\,\kinvar_{2}+48\,\kinvar_{1}\,\kinvar_{3}+48\,\kinvar_{1}\,\kinvar_{4}+24\,\kinvar_{2}\,\kinvar_{3}+24\,\kinvar_{2}\,\kinvar_{4}-48\,
\kinvar_{3}\,\kinvar_{4} \right) {\eps}^{3}+ \left( -58\,\kinvar_{1}\,\kinvar_{2}+29\,\kinvar_{1}\,\kinvar_{3}+29\,\kinvar_{1}\,\kinvar_{4}
 \right. \nonumber \\
 & & \left.
-2\,{\kinvar_{2}}^{2}+13
\,\kinvar_{2}\,\kinvar_{3}+13\,\kinvar_{2}\,\kinvar_{4}+{\kinvar_{3}}^{2}-26\,\kinvar_{3}\,\kinvar_{4}+{\kinvar_{4}}^{2}-2\,\kinvar_{2}+\kinvar_{3}+\kinvar_{4} \right) {\eps}^{
2}+ \left( -8\,\kinvar_{1}\,\kinvar_{2}+4\,\kinvar_{1}\,\kinvar_{3}+4\,\kinvar_{1}\,\kinvar_{4}
 \right. \nonumber \\
 & & \left.
+2\,\kinvar_{2}\,\kinvar_{3}+2\,\kinvar_{2}\,\kinvar_{4}-4\,\kinvar_{3}\,\kinvar_{4}
 \right) \eps,
 \nonumber \\
c_{11,6} & = & 6\, \left( 6\,\kinvar_{1}\,\kinvar_{2}-3\,\kinvar_{1}\,\kinvar_{3}-3\,\kinvar_{1}\,\kinvar_{4}-2\,{\kinvar_{2}}^{2}-3\,\kinvar_{2}\,\kinvar_{3}-3\,\kinvar_{2}\,
\kinvar_{4}+{\kinvar_{3}}^{2}+6\,\kinvar_{3}\,\kinvar_{4}+{\kinvar_{4}}^{2}-2\,\kinvar_{2}+\kinvar_{3}+\kinvar_{4} \right) {\eps}^{2}\kinvar_{1}
 \nonumber \\
 & & 
+2\, \left( 6\,\kinvar_{1}
\,\kinvar_{2}-3\,\kinvar_{1}\,\kinvar_{3}-3\,\kinvar_{1}\,\kinvar_{4}-2\,{\kinvar_{2}}^{2}-3\,\kinvar_{2}\,\kinvar_{3}-3\,\kinvar_{2}\,\kinvar_{4}+{\kinvar_{3}}^{2}+6\,\kinvar_{3}\,
\kinvar_{4}+{\kinvar_{4}}^{2}-2\,\kinvar_{2}+\kinvar_{3}+\kinvar_{4} \right) \eps\,\kinvar_{1},
 \nonumber \\
c_{11,7} & = & -2\, \left( 7\,{\kinvar_{1}}^{2}-18\,\kinvar_{1}\,\kinvar_{2}+17\,\kinvar_{1}\,\kinvar_{3}+17\,\kinvar_{1}\,\kinvar_{4}-{\kinvar_{2}}^{2}+3\,\kinvar_{2}\,
\kinvar_{3}+3\,\kinvar_{2}\,\kinvar_{4}-2\,{\kinvar_{3}}^{2}-16\,\kinvar_{3}\,\kinvar_{4}-2\,{\kinvar_{4}}^{2}+6\,\kinvar_{1}-2\,\kinvar_{2}
 \right. \nonumber \\
 & & \left.
-3\,\kinvar_{3}-3\,\kinvar_{4}-1
 \right) {\eps}^{2}\kinvar_{2}- \left( 4\,{\kinvar_{1}}^{2}-12\,\kinvar_{1}\,\kinvar_{2}+11\,\kinvar_{1}\,\kinvar_{3}+11\,\kinvar_{1}\,\kinvar_{4}+3\,\kinvar_{2}\,
\kinvar_{3}+3\,\kinvar_{2}\,\kinvar_{4}-{\kinvar_{3}}^{2}-10\,\kinvar_{3}\,\kinvar_{4}-{\kinvar_{4}}^{2}
 \right. \nonumber \\
 & & \left.
+4\,\kinvar_{1}-\kinvar_{3}-\kinvar_{4} \right) \eps\,\kinvar_{2},
 \nonumber \\
c_{11,8} & = &  \left( 7\,{\kinvar_{1}}^{2}+62\,\kinvar_{1}\,\kinvar_{2}-18\,\kinvar_{1}\,\kinvar_{3}-28\,\kinvar_{1}\,\kinvar_{4}+7\,{\kinvar_{2}}^{2}-6\,\kinvar_{2}\,
\kinvar_{3}-16\,\kinvar_{2}\,\kinvar_{4}-{\kinvar_{3}}^{2}+12\,\kinvar_{3}\,\kinvar_{4}-11\,{\kinvar_{4}}^{2}+6\,\kinvar_{1}+6\,\kinvar_{2}
 \right. \nonumber \\
 & & \left.
-2\,\kinvar_{3}-12\,\kinvar_{4}-1
 \right) {\eps}^{2}\kinvar_{3}+ \left( 2\,{\kinvar_{1}}^{2}+19\,\kinvar_{1}\,\kinvar_{2}-6\,\kinvar_{1}\,\kinvar_{3}-8\,\kinvar_{1}\,\kinvar_{4}+{\kinvar_{2}}^{2}-3
\,\kinvar_{2}\,\kinvar_{3}-5\,\kinvar_{2}\,\kinvar_{4}+6\,\kinvar_{3}\,\kinvar_{4}-2\,{\kinvar_{4}}^{2}
 \right. \nonumber \\
 & & \left.
+2\,\kinvar_{1}+\kinvar_{2}-2\,\kinvar_{4} \right) \eps\,\kinvar_{3},
 \nonumber \\
c_{11,9} & = &  \left( 7\,{\kinvar_{1}}^{2}+62\,\kinvar_{1}\,\kinvar_{2}-28\,\kinvar_{1}\,\kinvar_{3}-18\,\kinvar_{1}\,\kinvar_{4}+7\,{\kinvar_{2}}^{2}-16\,\kinvar_{2}\,
\kinvar_{3}-6\,\kinvar_{2}\,\kinvar_{4}-11\,{\kinvar_{3}}^{2}+12\,\kinvar_{3}\,\kinvar_{4}-{\kinvar_{4}}^{2}+6\,\kinvar_{1}+6\,\kinvar_{2}
 \right. \nonumber \\
 & & \left.
-12\,\kinvar_{3}-2\,\kinvar_{4}-1
 \right) {\eps}^{2}\kinvar_{4}+ \left( 2\,{\kinvar_{1}}^{2}+19\,\kinvar_{1}\,\kinvar_{2}-8\,\kinvar_{1}\,\kinvar_{3}-6\,\kinvar_{1}\,\kinvar_{4}+{\kinvar_{2}}^{2}-5
\,\kinvar_{2}\,\kinvar_{3}-3\,\kinvar_{2}\,\kinvar_{4}-2\,{\kinvar_{3}}^{2}+6\,\kinvar_{3}\,\kinvar_{4}
 \right. \nonumber \\
 & & \left.
+2\,\kinvar_{1}+\kinvar_{2}-2\,\kinvar_{3} \right) \eps\,\kinvar_{4},
 \nonumber \\
c_{11,10} & = & 2\, \left( 2\,{\kinvar_{1}}^{2}-4\,\kinvar_{1}\,\kinvar_{2}+5\,\kinvar_{1}\,\kinvar_{3}+5\,\kinvar_{1}\,\kinvar_{4}+2\,{\kinvar_{2}}^{2}+3\,\kinvar_{2}\,
\kinvar_{3}+3\,\kinvar_{2}\,\kinvar_{4}-{\kinvar_{3}}^{2}-6\,\kinvar_{3}\,\kinvar_{4}-{\kinvar_{4}}^{2}+4\,\kinvar_{1}+4\,\kinvar_{2}+\kinvar_{3}+\kinvar_{4}
 \right. \nonumber \\
 & & \left.
+2 \right) \eps\,
\kinvar_{1}\,\kinvar_{2},
 \nonumber \\
c_{11,11} & = & -2\, \left( {\kinvar_{1}}^{2}+7\,\kinvar_{1}\,\kinvar_{2}-2\,\kinvar_{1}\,\kinvar_{3}-2\,\kinvar_{1}\,\kinvar_{4}-2\,{\kinvar_{2}}^{2}-3\,\kinvar_{2}\,
\kinvar_{3}-3\,\kinvar_{2}\,\kinvar_{4}+{\kinvar_{3}}^{2}+6\,\kinvar_{3}\,\kinvar_{4}+{\kinvar_{4}}^{2}+2\,\kinvar_{1}-\kinvar_{2}+2\,\kinvar_{3}
 \right. \nonumber \\
 & & \left.
+2\,\kinvar_{4}+1 \right) \eps
\,\kinvar_{1}\,\kinvar_{3},
 \nonumber \\
c_{11,12} & = & -2\, \left( {\kinvar_{1}}^{2}+7\,\kinvar_{1}\,\kinvar_{2}-2\,\kinvar_{1}\,\kinvar_{3}-2\,\kinvar_{1}\,\kinvar_{4}-2\,{\kinvar_{2}}^{2}-3\,\kinvar_{2}\,
\kinvar_{3}-3\,\kinvar_{2}\,\kinvar_{4}+{\kinvar_{3}}^{2}+6\,\kinvar_{3}\,\kinvar_{4}+{\kinvar_{4}}^{2}+2\,\kinvar_{1}-\kinvar_{2}+2\,\kinvar_{3}
 \right. \nonumber \\
 & & \left.
+2\,\kinvar_{4}+1 \right) \eps
\,\kinvar_{1}\,\kinvar_{4},
 \nonumber \\
c_{11,13} & = &  \left( 3\,{\kinvar_{1}}^{2}-14\,\kinvar_{1}\,\kinvar_{2}+10\,\kinvar_{1}\,\kinvar_{3}+12\,\kinvar_{1}\,\kinvar_{4}-{\kinvar_{2}}^{2}+2\,\kinvar_{2}\,
\kinvar_{3}+4\,\kinvar_{2}\,\kinvar_{4}-{\kinvar_{3}}^{2}-8\,\kinvar_{3}\,\kinvar_{4}+{\kinvar_{4}}^{2}+2\,\kinvar_{1}-2\,\kinvar_{2}-2\,\kinvar_{3}
 \right. \nonumber \\
 & & \left.
-1 \right) \eps\,\kinvar_{2}
\,\kinvar_{3},
 \nonumber \\
c_{11,14} & = &  \left( 3\,{\kinvar_{1}}^{2}-14\,\kinvar_{1}\,\kinvar_{2}+12\,\kinvar_{1}\,\kinvar_{3}+10\,\kinvar_{1}\,\kinvar_{4}-{\kinvar_{2}}^{2}+4\,\kinvar_{2}\,
\kinvar_{3}+2\,\kinvar_{2}\,\kinvar_{4}+{\kinvar_{3}}^{2}-8\,\kinvar_{3}\,\kinvar_{4}-{\kinvar_{4}}^{2}+2\,\kinvar_{1}-2\,\kinvar_{2}-2\,\kinvar_{4}
 \right. \nonumber \\
 & & \left.
-1 \right) \eps\,\kinvar_{2}
\,\kinvar_{4},
 \nonumber \\
c_{11,15} & = & -2\, \left( 3\,{\kinvar_{1}}^{2}+12\,\kinvar_{1}\,\kinvar_{2}-2\,\kinvar_{1}\,\kinvar_{3}-2\,\kinvar_{1}\,\kinvar_{4}+{\kinvar_{2}}^{2}-2\,\kinvar_{2}\,
\kinvar_{3}-2\,\kinvar_{2}\,\kinvar_{4}-{\kinvar_{3}}^{2}+2\,\kinvar_{3}\,\kinvar_{4}-{\kinvar_{4}}^{2}+2\,\kinvar_{1}-2\,\kinvar_{3}-2\,\kinvar_{4}
 \right. \nonumber \\
 & & \left.
-1 \right) \eps\,\kinvar_{3}
\,\kinvar_{4},
 \nonumber \\
c_{12,5} & = &  \left( -24\,\kinvar_{1}\,\kinvar_{2}+12\,\kinvar_{1}\,\kinvar_{3}+12\,\kinvar_{1}\,\kinvar_{4}-24\,{\kinvar_{2}}^{2}-84\,\kinvar_{2}\,\kinvar_{3}-84\,
\kinvar_{2}\,\kinvar_{4}+12\,{\kinvar_{3}}^{2}+168\,\kinvar_{3}\,\kinvar_{4}+12\,{\kinvar_{4}}^{2}-24\,\kinvar_{2}
 \right. \nonumber \\
 & & \left.
+12\,\kinvar_{3}
+12\,\kinvar_{4} \right) {\eps}^{3
}+ \left( -6\,\kinvar_{1}\,\kinvar_{2}+3\,\kinvar_{1}\,\kinvar_{3}+3\,\kinvar_{1}\,\kinvar_{4}-6\,{\kinvar_{2}}^{2}-45\,\kinvar_{2}\,\kinvar_{3}-45\,\kinvar_{2}\,\kinvar_{4}+3
\,{\kinvar_{3}}^{2}+90\,\kinvar_{3}\,\kinvar_{4}+3\,{\kinvar_{4}}^{2}
 \right. \nonumber \\
 & & \left.
-6\,\kinvar_{2}
+3\,\kinvar_{3}+3\,\kinvar_{4} \right) {\eps}^{2}+ \left( -6\,\kinvar_{2}
\,\kinvar_{3}-6\,\kinvar_{2}\,\kinvar_{4}+12\,\kinvar_{3}\,\kinvar_{4} \right) \eps,
 \nonumber \\
c_{12,6} & = & 6\, \left( 2\,\kinvar_{1}\,\kinvar_{2}-\kinvar_{1}\,\kinvar_{3}-\kinvar_{1}\,\kinvar_{4}+2\,{\kinvar_{2}}^{2}+7\,\kinvar_{2}\,\kinvar_{3}+7\,\kinvar_{2}\,\kinvar_{4}-{
\kinvar_{3}}^{2}-14\,\kinvar_{3}\,\kinvar_{4}-{\kinvar_{4}}^{2}+2\,\kinvar_{2}-\kinvar_{3}-\kinvar_{4} \right) {\eps}^{2}\kinvar_{1}
 \nonumber \\
 & & 
+12\, \left( \kinvar_{2}\,\kinvar_{3}+
\kinvar_{2}\,\kinvar_{4}-2\,\kinvar_{3}\,\kinvar_{4} \right) \eps\,\kinvar_{1},
 \nonumber \\
c_{12,7} & = & 6\, \left( {\kinvar_{1}}^{2}+2\,\kinvar_{1}\,\kinvar_{2}-\kinvar_{1}\,\kinvar_{3}-\kinvar_{1}\,\kinvar_{4}+{\kinvar_{2}}^{2}+5\,\kinvar_{2}\,\kinvar_{3}+5\,\kinvar_{2}
\,\kinvar_{4}-2\,{\kinvar_{3}}^{2}-16\,\kinvar_{3}\,\kinvar_{4}-2\,{\kinvar_{4}}^{2}+2\,\kinvar_{1}+2\,\kinvar_{2}-\kinvar_{3}-\kinvar_{4}
 \right. \nonumber \\
 & & \left.
+1 \right) {\eps}^{2}\kinvar_{2}
-3\, \left( \kinvar_{1}\,\kinvar_{3}+\kinvar_{1}\,\kinvar_{4}-3\,\kinvar_{2}\,\kinvar_{3}-3\,\kinvar_{2}\,\kinvar_{4}+{\kinvar_{3}}^{2}+10\,\kinvar_{3}\,\kinvar_{4}+{\kinvar_{4}}^{2}
+\kinvar_{3}+\kinvar_{4} \right) \eps\,\kinvar_{2},
 \nonumber \\
c_{12,8} & = & -3\, \left( {\kinvar_{1}}^{2}+2\,\kinvar_{1}\,\kinvar_{2}+2\,\kinvar_{1}\,\kinvar_{3}-4\,\kinvar_{1}\,\kinvar_{4}+{\kinvar_{2}}^{2}-10\,\kinvar_{2}\,\kinvar_{3}-
16\,\kinvar_{2}\,\kinvar_{4}+{\kinvar_{3}}^{2}+20\,\kinvar_{3}\,\kinvar_{4}-5\,{\kinvar_{4}}^{2}+2\,\kinvar_{1}+2\,\kinvar_{2}
 \right. \nonumber \\
 & & \left.
+2\,\kinvar_{3}
-4\,\kinvar_{4}+1 \right) {
\eps}^{2}\kinvar_{3}-3\, \left( \kinvar_{1}\,\kinvar_{2}-2\,\kinvar_{1}\,\kinvar_{4}+{\kinvar_{2}}^{2}-3\,\kinvar_{2}\,\kinvar_{3}-5\,\kinvar_{2}\,\kinvar_{4}+6\,\kinvar_{3}\,
\kinvar_{4}-2\,{\kinvar_{4}}^{2}+\kinvar_{2}-2\,\kinvar_{4} \right) \eps\,\kinvar_{3},
 \nonumber \\
c_{12,9} & = & -3\, \left( {\kinvar_{1}}^{2}+2\,\kinvar_{1}\,\kinvar_{2}-4\,\kinvar_{1}\,\kinvar_{3}+2\,\kinvar_{1}\,\kinvar_{4}+{\kinvar_{2}}^{2}-16\,\kinvar_{2}\,\kinvar_{3}-
10\,\kinvar_{2}\,\kinvar_{4}-5\,{\kinvar_{3}}^{2}+20\,\kinvar_{3}\,\kinvar_{4}+{\kinvar_{4}}^{2}+2\,\kinvar_{1}+2\,\kinvar_{2}
 \right. \nonumber \\
 & & \left.
-4\,\kinvar_{3}
+2\,\kinvar_{4}+1 \right) {
\eps}^{2}\kinvar_{4}-3\, \left( \kinvar_{1}\,\kinvar_{2}-2\,\kinvar_{1}\,\kinvar_{3}+{\kinvar_{2}}^{2}-5\,\kinvar_{2}\,\kinvar_{3}-3\,\kinvar_{2}\,\kinvar_{4}-2\,{\kinvar_{3}}^
{2}+6\,\kinvar_{3}\,\kinvar_{4}+\kinvar_{2}-2\,\kinvar_{3} \right) \eps\,\kinvar_{4},
 \nonumber \\
c_{12,10} & = & 6\, \left( \kinvar_{1}\,\kinvar_{3}+\kinvar_{1}\,\kinvar_{4}-\kinvar_{2}\,\kinvar_{3}-\kinvar_{2}\,\kinvar_{4}+{\kinvar_{3}}^{2}+6\,\kinvar_{3}\,\kinvar_{4}+{\kinvar_{4}}^{
2}+\kinvar_{3}+\kinvar_{4} \right) \eps\,\kinvar_{1}\,\kinvar_{2},
 \nonumber \\
c_{12,11} & = & 6\, \left( \kinvar_{1}\,\kinvar_{2}-2\,\kinvar_{1}\,\kinvar_{4}+{\kinvar_{2}}^{2}-\kinvar_{2}\,\kinvar_{3}-3\,\kinvar_{2}\,\kinvar_{4}+2\,\kinvar_{3}\,\kinvar_{4}-2\,
{\kinvar_{4}}^{2}+\kinvar_{2}-2\,\kinvar_{4} \right) \eps\,\kinvar_{1}\,\kinvar_{3},
 \nonumber \\
c_{12,12} & = & 6\, \left( \kinvar_{1}\,\kinvar_{2}-2\,\kinvar_{1}\,\kinvar_{3}+{\kinvar_{2}}^{2}-3\,\kinvar_{2}\,\kinvar_{3}-\kinvar_{2}\,\kinvar_{4}-2\,{\kinvar_{3}}^{2}+2\,
\kinvar_{3}\,\kinvar_{4}+\kinvar_{2}-2\,\kinvar_{3} \right) \eps\,\kinvar_{1}\,\kinvar_{4},
 \nonumber \\
c_{12,13} & = & 3\, \left( {\kinvar_{1}}^{2}+2\,\kinvar_{1}\,\kinvar_{2}+2\,\kinvar_{1}\,\kinvar_{3}+{\kinvar_{2}}^{2}-2\,\kinvar_{2}\,\kinvar_{3}-4\,\kinvar_{2}\,\kinvar_{4}+{
\kinvar_{3}}^{2}+8\,\kinvar_{3}\,\kinvar_{4}-{\kinvar_{4}}^{2}+2\,\kinvar_{1}+2\,\kinvar_{2}+2\,\kinvar_{3}+1 \right) \eps\,\kinvar_{2}\,\kinvar_{3},
 \nonumber \\
c_{12,14} & = & 3\, \left( {\kinvar_{1}}^{2}+2\,\kinvar_{1}\,\kinvar_{2}+2\,\kinvar_{1}\,\kinvar_{4}+{\kinvar_{2}}^{2}-4\,\kinvar_{2}\,\kinvar_{3}-2\,\kinvar_{2}\,\kinvar_{4}-{
\kinvar_{3}}^{2}+8\,\kinvar_{3}\,\kinvar_{4}+{\kinvar_{4}}^{2}+2\,\kinvar_{1}+2\,\kinvar_{2}+2\,\kinvar_{4}+1 \right) \eps\,\kinvar_{2}\,\kinvar_{4},
 \nonumber \\
c_{12,15} & = & -6\, \left( {\kinvar_{1}}^{2}+2\,\kinvar_{1}\,\kinvar_{3}+2\,\kinvar_{1}\,\kinvar_{4}-{\kinvar_{2}}^{2}+2\,\kinvar_{2}\,\kinvar_{3}+2\,\kinvar_{2}\,\kinvar_{4}+
{\kinvar_{3}}^{2}-2\,\kinvar_{3}\,\kinvar_{4}+{\kinvar_{4}}^{2}+2\,\kinvar_{1}+2\,\kinvar_{3}+2\,\kinvar_{4}+1 \right) \eps\,\kinvar_{3}\,\kinvar_{4},
 \nonumber \\
c_{13,5} & = & -3\, \left( \kinvar_{3}-\kinvar_{4} \right)  \left( 17\,\kinvar_{1}-7\,\kinvar_{2}+\kinvar_{3}+\kinvar_{4}+1 \right) {\eps}^{3}- \left( 
\kinvar_{3}-\kinvar_{4} \right)  \left( 29\,\kinvar_{1}-13\,\kinvar_{2}+\kinvar_{3}+\kinvar_{4}+1 \right) {\eps}^{2}
 \nonumber \\
 & & 
-2\, \left( 2\,\kinvar_{1}-\kinvar_{2}
 \right)  \left( \kinvar_{3}-\kinvar_{4} \right) \eps,
 \nonumber \\
c_{13,6} & = & 6\, \left( \kinvar_{3}-\kinvar_{4} \right)  \left( 3\,\kinvar_{1}-3\,\kinvar_{2}-\kinvar_{3}-\kinvar_{4}-1 \right) {\eps}^{2}\kinvar_{1}+2\,
 \left( 3\,\kinvar_{1}-3\,\kinvar_{2}-\kinvar_{3}-\kinvar_{4}-1 \right)  \left( \kinvar_{3}-\kinvar_{4} \right) \eps\,\kinvar_{1},
 \nonumber \\
c_{13,7} & = & 6\, \left( \kinvar_{3}-\kinvar_{4} \right)  \left( 5\,\kinvar_{1}-\kinvar_{2}+\kinvar_{3}+\kinvar_{4}+1 \right) {\eps}^{2}\kinvar_{2}+ \left( 
\kinvar_{3}-\kinvar_{4} \right)  \left( 9\,\kinvar_{1}-3\,\kinvar_{2}+\kinvar_{3}+\kinvar_{4}+1 \right) \eps\,\kinvar_{2},
 \nonumber \\
c_{13,8} & = & - \left( 7\,{\kinvar_{1}}^{2}+2\,\kinvar_{1}\,\kinvar_{2}-18\,\kinvar_{1}\,\kinvar_{3}+32\,\kinvar_{1}\,\kinvar_{4}-5\,{\kinvar_{2}}^{2}+6\,\kinvar_{2}\,
\kinvar_{3}-16\,\kinvar_{2}\,\kinvar_{4}-{\kinvar_{3}}^{2}+{\kinvar_{4}}^{2}+6\,\kinvar_{1}-6\,\kinvar_{2}-2\,\kinvar_{3}
 \right. \nonumber \\
 & & \left.
-1 \right) {\eps}^{2}\kinvar_{3}- \left( 2
\,\kinvar_{1}-\kinvar_{2} \right)  \left( \kinvar_{1}+\kinvar_{2}-3\,\kinvar_{3}+5\,\kinvar_{4}+1 \right) \eps\,\kinvar_{3},
 \nonumber \\
c_{13,9} & = &  \left( 7\,{\kinvar_{1}}^{2}+2\,\kinvar_{1}\,\kinvar_{2}+32\,\kinvar_{1}\,\kinvar_{3}-18\,\kinvar_{1}\,\kinvar_{4}-5\,{\kinvar_{2}}^{2}-16\,\kinvar_{2}\,
\kinvar_{3}+6\,\kinvar_{2}\,\kinvar_{4}+{\kinvar_{3}}^{2}-{\kinvar_{4}}^{2}+6\,\kinvar_{1}-6\,\kinvar_{2}-2\,\kinvar_{4}-1 \right) {\eps}^{2}\kinvar_{4}
 \nonumber \\
 & & 
+ \left( 2\,
\kinvar_{1}-\kinvar_{2} \right)  \left( \kinvar_{1}+\kinvar_{2}+5\,\kinvar_{3}-3\,\kinvar_{4}+1 \right) \eps\,\kinvar_{4},
 \nonumber \\
c_{13,10} & = & -2\, \left( \kinvar_{3}-\kinvar_{4} \right)  \left( 3\,\kinvar_{1}-3\,\kinvar_{2}-\kinvar_{3}-\kinvar_{4}-1 \right) \eps\,\kinvar_{1}\,\kinvar_{2},
 \nonumber \\
c_{13,11} & = & 2\, \left( {\kinvar_{1}}^{2}+\kinvar_{1}\,\kinvar_{2}-2\,\kinvar_{1}\,\kinvar_{3}+4\,\kinvar_{1}\,\kinvar_{4}+3\,\kinvar_{2}\,\kinvar_{3}-3\,\kinvar_{2}\,\kinvar_{4}+
{\kinvar_{3}}^{2}-{\kinvar_{4}}^{2}+2\,\kinvar_{1}+\kinvar_{2}+2\,\kinvar_{3}+1 \right) \eps\,\kinvar_{1}\,\kinvar_{3},
 \nonumber \\
c_{13,12} & = & -2\, \left( {\kinvar_{1}}^{2}+\kinvar_{1}\,\kinvar_{2}+4\,\kinvar_{1}\,\kinvar_{3}-2\,\kinvar_{1}\,\kinvar_{4}-3\,\kinvar_{2}\,\kinvar_{3}+3\,\kinvar_{2}\,\kinvar_{4}
-{\kinvar_{3}}^{2}+{\kinvar_{4}}^{2}+2\,\kinvar_{1}+\kinvar_{2}+2\,\kinvar_{4}+1 \right) \eps\,\kinvar_{1}\,\kinvar_{4},
 \nonumber \\
c_{13,13} & = &  \left( 3\,{\kinvar_{1}}^{2}+2\,\kinvar_{1}\,\kinvar_{2}-6\,\kinvar_{1}\,\kinvar_{3}+12\,\kinvar_{1}\,\kinvar_{4}-{\kinvar_{2}}^{2}+2\,\kinvar_{2}\,\kinvar_{3}-
4\,\kinvar_{2}\,\kinvar_{4}-{\kinvar_{3}}^{2}+{\kinvar_{4}}^{2}+2\,\kinvar_{1}-2\,\kinvar_{2}-2\,\kinvar_{3}-1 \right) \eps\,\kinvar_{2}\,\kinvar_{3},
 \nonumber \\
c_{13,14} & = & - \left( 3\,{\kinvar_{1}}^{2}+2\,\kinvar_{1}\,\kinvar_{2}+12\,\kinvar_{1}\,\kinvar_{3}-6\,\kinvar_{1}\,\kinvar_{4}-{\kinvar_{2}}^{2}-4\,\kinvar_{2}\,\kinvar_{3}
+2\,\kinvar_{2}\,\kinvar_{4}+{\kinvar_{3}}^{2}-{\kinvar_{4}}^{2}+2\,\kinvar_{1}-2\,\kinvar_{2}-2\,\kinvar_{4}-1 \right) \eps\,\kinvar_{2}\,\kinvar_{4},
 \nonumber \\
c_{13,15} & = & -4\, \left( \kinvar_{3}-\kinvar_{4} \right)  \left( 2\,\kinvar_{1}-\kinvar_{2} \right) \eps\,\kinvar_{3}\,\kinvar_{4},
 \nonumber \\
c_{14,5} & = & -12\, \left( \kinvar_{3}-\kinvar_{4} \right)  \left( \kinvar_{1}+7\,\kinvar_{2}+\kinvar_{3}+\kinvar_{4}+1 \right) {\eps}^{3}-3\, \left( 
\kinvar_{3}-\kinvar_{4} \right)  \left( \kinvar_{1}+15\,\kinvar_{2}+\kinvar_{3}+\kinvar_{4}+1 \right) {\eps}^{2}-6\, \left( \kinvar_{3}-\kinvar_{4} \right) 
\eps\,\kinvar_{2},
 \nonumber \\
c_{14,6} & = & 6\, \left( \kinvar_{3}-\kinvar_{4} \right)  \left( \kinvar_{1}+7\,\kinvar_{2}+\kinvar_{3}+\kinvar_{4}+1 \right) {\eps}^{2}\kinvar_{1}+12\,
 \left( \kinvar_{3}-\kinvar_{4} \right) \eps\,\kinvar_{1}\,\kinvar_{2},
 \nonumber \\
c_{14,7} & = & -6\, \left( \kinvar_{3}-\kinvar_{4} \right)  \left( \kinvar_{1}-5\,\kinvar_{2}+\kinvar_{3}+\kinvar_{4}+1 \right) {\eps}^{2}\kinvar_{2}-3\,
 \left( \kinvar_{3}-\kinvar_{4} \right)  \left( \kinvar_{1}-3\,\kinvar_{2}+\kinvar_{3}+\kinvar_{4}+1 \right) \eps\,\kinvar_{2},
 \nonumber \\
c_{14,8} & = & 3\, \left( {\kinvar_{1}}^{2}-2\,\kinvar_{1}\,\kinvar_{2}+2\,\kinvar_{1}\,\kinvar_{3}-3\,{\kinvar_{2}}^{2}+10\,\kinvar_{2}\,\kinvar_{3}-16\,\kinvar_{2}\,
\kinvar_{4}+{\kinvar_{3}}^{2}-{\kinvar_{4}}^{2}+2\,\kinvar_{1}-2\,\kinvar_{2}+2\,\kinvar_{3}+1 \right) {\eps}^{2}\kinvar_{3}
 \nonumber \\
 & & 
-3\, \left( \kinvar_{1}+\kinvar_{2}-3\,
\kinvar_{3}+5\,\kinvar_{4}+1 \right) \eps\,\kinvar_{2}\,\kinvar_{3},
 \nonumber \\
c_{14,9} & = & -3\, \left( {\kinvar_{1}}^{2}-2\,\kinvar_{1}\,\kinvar_{2}+2\,\kinvar_{1}\,\kinvar_{4}-3\,{\kinvar_{2}}^{2}-16\,\kinvar_{2}\,\kinvar_{3}+10\,\kinvar_{2}\,
\kinvar_{4}-{\kinvar_{3}}^{2}+{\kinvar_{4}}^{2}+2\,\kinvar_{1}-2\,\kinvar_{2}+2\,\kinvar_{4}+1 \right) {\eps}^{2}\kinvar_{4}
 \nonumber \\
 & & 
+3\, \left( \kinvar_{1}+\kinvar_{2}+5\,
\kinvar_{3}-3\,\kinvar_{4}+1 \right) \eps\,\kinvar_{2}\,\kinvar_{4},
 \nonumber \\
c_{14,10} & = & 6\, \left( \kinvar_{3}-\kinvar_{4} \right)  \left( \kinvar_{1}-\kinvar_{2}+\kinvar_{3}+\kinvar_{4}+1 \right) \eps\,\kinvar_{1}\,\kinvar_{2},
 \nonumber \\
c_{14,11} & = & 6\, \left( \kinvar_{1}+\kinvar_{2}-\kinvar_{3}+3\,\kinvar_{4}+1 \right) \eps\,\kinvar_{1}\,\kinvar_{2}\,\kinvar_{3},
 \nonumber \\
c_{14,12} & = & -6\, \left( \kinvar_{1}+\kinvar_{2}+3\,\kinvar_{3}-\kinvar_{4}+1 \right) \eps\,\kinvar_{1}\,\kinvar_{2}\,\kinvar_{4},
 \nonumber \\
c_{14,13} & = & 3\, \left( {\kinvar_{1}}^{2}+2\,\kinvar_{1}\,\kinvar_{2}+2\,\kinvar_{1}\,\kinvar_{3}+{\kinvar_{2}}^{2}-2\,\kinvar_{2}\,\kinvar_{3}+4\,\kinvar_{2}\,\kinvar_{4}+{
\kinvar_{3}}^{2}-{\kinvar_{4}}^{2}+2\,\kinvar_{1}+2\,\kinvar_{2}+2\,\kinvar_{3}+1 \right) \eps\,\kinvar_{2}\,\kinvar_{3},
 \nonumber \\
c_{14,14} & = & -3\, \left( {\kinvar_{1}}^{2}+2\,\kinvar_{1}\,\kinvar_{2}+2\,\kinvar_{1}\,\kinvar_{4}+{\kinvar_{2}}^{2}+4\,\kinvar_{2}\,\kinvar_{3}-2\,\kinvar_{2}\,\kinvar_{4}-
{\kinvar_{3}}^{2}+{\kinvar_{4}}^{2}+2\,\kinvar_{1}+2\,\kinvar_{2}+2\,\kinvar_{4}+1 \right) \eps\,\kinvar_{2}\,\kinvar_{4},
 \nonumber \\
c_{14,15} & = & -12\, \left( \kinvar_{3}-\kinvar_{4} \right) \eps\,\kinvar_{2}\,\kinvar_{3}\,\kinvar_{4}.
\eq
}
These expressions are also available through a supplementary electronic file
attached to the arxiv version of this article.

\section{Determining the functions of the ansatz}
\label{appendix:dgl_step_2}

In this appendix we give a detailed discussion for the transformation from an
$F^\bullet$-compatible differential equation to an $\eps$-factorised differential equation in the case 
where we have three non-trivial parts in the filtration
\bq
 \emptyset = F^3 V \subseteq F^2 V \subseteq F^1 V \subseteq F^0 V = V.
\eq
We set
\bq
 d_1 \; = \; \dim \mathrm{Gr}_F^2 V,
 \;\;\;\;\;\;
 d_2 \; = \; \dim \mathrm{Gr}_F^1 V,
 \;\;\;\;\;\;
 d_3 \; = \; \dim \mathrm{Gr}_F^0 V.
\eq
The matrix $A$ appearing in the differential equation $dJ = \hat{A}J$ is then of the form
\bq
 \hat{A}
 & = &
 \left( \begin{array}{rrr}
 A^{(0)}_{11} + \eps A^{(1)}_{11} & \eps A^{(1)}_{12} & 0 \\
 \frac{1}{\eps} A^{(-1)}_{21} + A^{(0)}_{21} + \eps A^{(1)}_{21} & A^{(0)}_{22} + \eps A^{(1)}_{22} & \eps A^{(1)}_{23} \\
 \frac{1}{\eps^2} A^{(-2)}_{31} + \frac{1}{\eps} A^{(-1)}_{31} + A^{(0)}_{31} + \eps A^{(1)}_{31} & \frac{1}{\eps} A^{(-1)}_{32} + A^{(0)}_{32} + \eps A^{(1)}_{32} & A^{(0)}_{33} + \eps A^{(1)}_{33} \\
 \end{array} \right),
\eq
where the entries are in general matrices.
We have $\dim A^{(j)}_{11} = d_1$, $\dim A^{(j)}_{22} = d_2$ and $\dim A^{(j)}_{33} = d_3$.
The off-diagonal entries have the dimensions dictated by the diagonal blocks.
In the case of the unequal-mass three-loop banana integral, 
$A^{(j)}_{11}$ is a $(1 \times 1)$-matrix,
$A^{(j)}_{22}$ is a $(8 \times 8)$-matrix and
$A^{(j)}_{33}$ is a $(2 \times 2)$-matrix.
It is worth discussing the case with three non-trivial parts in the filtration in full generality.
We construct a transformation
\bq
 R_2 & = & R_2^{(-2)} R_2^{(-1)} R_2^{(0)},
\eq
leading to the $\eps$-factorised basis $K=R_2^{-1} J$.
The ansatz for $R_2^{(-2)}$reads
\bq
 R_2^{(-2)}
 & = &
 \left( \begin{array}{rrr}
 R^{(0)}_{11} & 0 & 0 \\
 \frac{1}{\eps} R^{(-1)}_{21} & R^{(0)}_{22} & 0 \\
 \frac{1}{\eps^2} R^{(-2)}_{31} & \frac{1}{\eps} R^{(-1)}_{32} & R^{(0)}_{33} \\
 \end{array} \right).
\eq
For a basis transformation the determinant of $R_2^{(-2)}$ has to be non-zero, this implies that 
$R^{(0)}_{11}$, $R^{(0)}_{22}$ and $R^{(0)}_{33}$ are invertible matrices.

Requiring that terms of $B$-order $(-2)$ vanish, gives us six equations.
These equations group into $6=3+2+1$ as follows:
The first group of three equations involves only $R^{(0)}_{11}$, $R^{(0)}_{21}$ and $R^{(0)}_{31}$:
\bq
 d R^{(0)}_{11} & = & A^{(0)}_{11} R^{(0)}_{11} + A^{(1)}_{12} R^{(-1)}_{21},
 \nonumber \\
 d R^{(-1)}_{21} & = & A^{(-1)}_{21} R^{(0)}_{11} + A^{(0)}_{22} R^{(-1)}_{21} + A^{(1)}_{23} R^{(-2)}_{31},
 \nonumber \\
 d R^{(-2)}_{31} & = & A^{(-2)}_{31} R^{(0)}_{11} + A^{(-1)}_{32} R^{(-1)}_{21} + A^{(0)}_{33} R^{(-2)}_{31}.
\eq
The second set involves in addition $R^{(0)}_{22}$ and $R^{(0)}_{32}$:
\bq
 d R^{(0)}_{22} & = & A^{(0)}_{22} R^{(0)}_{22} + A^{(1)}_{23} R^{(-1)}_{32} - R^{(-1)}_{21} \left( R^{(0)}_{11} \right)^{-1} A^{(1)}_{12} R^{(0)}_{22},
 \nonumber \\
 d R^{(-1)}_{32} & = & A^{(-1)}_{32} R^{(0)}_{22} + A^{(0)}_{33} R^{(-1)}_{32} - R^{(-2)}_{31} \left( R^{(0)}_{11} \right)^{-1} A^{(1)}_{12} R^{(0)}_{22}.
\eq
The last equation involves in addition $R^{(0)}_{33}$:
\bq
 d R^{(0)}_{33} & = & A^{(0)}_{33} R^{(0)}_{33} - R^{(-1)}_{32} \left( R^{(0)}_{22} \right)^{-1} A^{(1)}_{23} R^{(0)}_{33}.
\eq
We then turn to $R_2^{(-1)}$. The ansatz for $R_2^{(-1)}$ reads
\bq
 R_2^{(-1)}
 & = &
 \left( \begin{array}{rrr}
 1 & 0 & 0 \\
 R^{(0)}_{21} & 1 & 0 \\
 \frac{1}{\eps} R^{(-1)}_{31} & R^{(0)}_{32} & 1 \\
 \end{array} \right).
\eq
Requiring that terms of $B$-order $(-1)$ vanish, gives us three equations, which split into $3=1+1+1$.
At each stage, the right-hand side involves only functions already known at this stage.
The equations are
\bq
 d R^{(-1)}_{31} & = & \left( R^{(0)}_{33} \right)^{-1} \left\{ A^{(-1)}_{31} R^{(0)}_{11} + A^{(0)}_{32} R^{(-1)}_{21} + A^{(1)}_{33} R^{(-2)}_{31}  
                   - R^{(-1)}_{32} \left( R^{(0)}_{22} \right)^{-1} \left( A^{(0)}_{21} R^{(0)}_{11} 
 \right. \right. \nonumber \\
 & & \left. \left.
                   + A^{(1)}_{22} R^{(-1)}_{21} \right)
                   - \left[ R^{(-2)}_{31} - R^{(-1)}_{32} \left( R^{(0)}_{22} \right)^{-1} R^{(-1)}_{21} \right] \left( R^{(0)}_{11} \right)^{-1} A^{(1)}_{11} R^{(0)}_{11} \right\},
 \nonumber \\
 d R^{(0)}_{21} & = & \left( R^{(0)}_{22} \right)^{-1} \left( A^{(0)}_{21} R^{(0)}_{11} + A^{(1)}_{22} R^{(-1)}_{21} + A^{(1)}_{23} R^{(0)}_{33} R^{(-1)}_{31}  
                   - R^{(-1)}_{21} \left( R^{(0)}_{11} \right)^{-1} A^{(1)}_{11} R^{(0)}_{11} \right),
 \nonumber \\
 d R^{(0)}_{32} & = & \left( R^{(0)}_{33} \right)^{-1} \left( A^{(0)}_{32} R^{(0)}_{22} + A^{(1)}_{33} R^{(-1)}_{32}
                   - R^{(-1)}_{32} \left( R^{(0)}_{22} \right)^{-1} A^{(1)}_{22} R^{(0)}_{22} \right)
 \nonumber \\
 & &
                   - R^{(-1)}_{31} \left( R^{(0)}_{11} \right)^{-1} A^{(1)}_{12} R^{(0)}_{22}.
\eq
Finally, we consider the matrix $R_2^{(0)}$. The ansatz for $R_2^{(0)}$ reads
\bq
 R^{(0)}
 & = &
 \left( \begin{array}{rrr}
 1 & 0 & 0 \\
 0 & 1 & 0 \\
 R^{(0)}_{31} & 0 & 1 \\
 \end{array} \right).
\eq
Requiring that terms of $B$-order $0$ vanish, gives us one equation:
\bq
 d R^{(0)}_{31} & = & 
 \left( R^{(0)}_{33} \right)^{-1} 
   \left[ A^{(0)}_{31} R^{(0)}_{11} 
          + A^{(1)}_{32} R^{(-1)}_{21} 
          + A^{(0)}_{32} R^{(0)}_{22} R^{(0)}_{21} 
          + A^{(1)}_{33} R^{(-1)}_{32} R^{(0)}_{21} 
          + A^{(1)}_{33} R^{(0)}_{33} R^{(-1)}_{31} 
 \right. \nonumber \\
 & & \left.
          - R^{(-1)}_{32} \left( R^{(0)}_{22} \right)^{-1} \left( A^{(1)}_{21} R^{(0)}_{11} + A^{(1)}_{22} R^{(0)}_{22} R^{(0)}_{21} \right) 
 \right]
 \nonumber \\
 & &
 - R^{(-1)}_{31} \left( R^{(0)}_{11} \right)^{-1} \left( A^{(1)}_{11} R^{(0)}_{11} + A^{(1)}_{12} R^{(0)}_{22} R^{(0)}_{21} \right).
\eq

\end{appendix}

% -----------------------------------------------------------------------------
% references
{\footnotesize
\bibliography{biblio}
\bibliographystyle{h-physrev5}
}

\end{document}